\definecolor{colorhkust}{RGB}{20,43,140}
\definecolor{colortsinghua}{RGB}{116,52,129}
\definecolor{color1}{RGB}{128,0,0}
\newtheorem{lemma}{Lemma}
\newtheorem{theorem}{Theorem}
\newtheorem{proposition}{Proposition}
\newtheorem{definition}{Definition}
\newtheorem{remark}{Remark}
\newcommand{\trace}{{\rm Tr}}
\newcommand{\grad}{\mathrm{grad}}
\newcommand{\hess}{\rm{Hess}}
\newcommand{\diagg}{\mathrm{diag}}
\begin{document}
	\title{Blind Demixing for Low-Latency Communication}
\author{Jialin Dong, \textit{Student Member}, \textit{IEEE},  Kai Yang, \textit{Student Member}, \textit{IEEE}, and Yuanming~Shi, \textit{Member}, \textit{IEEE}
	
	\thanks{J. Dong, K. Yang and Y. Shi are with the School of Information Science and Technology, ShanghaiTech University, Shanghai, 201210 China (e-mail: \{dongjl, yangkai, shiym\}@shanghaitech.edu.cn).}
	
}
	
	\maketitle
\begin{abstract}
In the next generation wireless networks, low-latency communication is critical to support emerging diversified applications, e.g., Tactile Internet and Virtual Reality. In this paper, a novel blind demixing approach is developed to reduce the channel signaling overhead, thereby supporting low-latency communication. Specifically, we develop a low-rank approach to recover the original information only based on the single observed vector without any channel estimation. To address the unique challenges of multiple non-convex rank-one constraints, the quotient manifold geometry of the product of complex symmetric rank-one matrices is exploited. This is achieved by equivalently reformulating the original problem that uses complex asymmetric matrices to the one
that uses Hermitian positive semidefinite matrices. We further generalize the geometric concepts of the complex product manifold via element-wise extension of the geometric concepts of the individual manifolds. The scalable Riemannian optimization algorithms, i.e., the Riemannian gradient descent algorithm and the Riemannian trust-region algorithm, are then developed to solve the blind demixing problem efficiently with low iteration complexity and low iteration cost. Statistical analysis shows that the Riemannian gradient descent with spectral initialization is guaranteed to linearly converge to the ground truth signals provided sufficient measurements. In addition, the Riemannian trust-region algorithm is provable to converge to an approximate local minimum from arbitrary initialization point. Numerical experiments have been carried out in the settings with different types of encoding matrices to demonstrate the algorithmic advantages, performance gains and sample efficiency of the Riemannian optimization algorithms.
\end{abstract}

\begin{IEEEkeywords}
Blind demixing, low-latency communication, low-rank optimization, product manifold, Riemannian optimization.
\end{IEEEkeywords}
\section{Introduction}
Recently, various emerging 5G applications such as Internet-of-Things (IoT) \cite{al2015internet}, Tactile Internet \cite{simsek20165g} and Virtual Reality \cite{bastug2017toward} are unleashing a sense of urgency in providing \emph{low-latency} communications \cite{andrews2014will}, for which innovative new technologies need to be developed. To achieve this goal, various solutions have been investigated, which can be typically categorized into three main types, i.e., radio access network (RAN), core network , as well as mobile edge caching and computing \cite{parvez2017survey}. In particular, by pushing the computation and storage resources to the network edge, followed by network densification, dense Fog-RAN provides a principled way to reduce the latency \cite{shi2015large}. In addition, reducing the packet blocklength, e.g., short packets communication \cite{durisi2016toward}, is a promising technique in RAN to support low-latency communication, for which the theoretical analysis on the tradeoffs among the channel coding rate, blocklength and error probability was provided in \cite{polyanskiy2010channel}.

However, channel signaling overhead reduction becomes critical to design a low latency communication system. In particular, when packet blocklength is reduced as envisioned in 5G systems, channel signaling overhead dominates the major portion of the packet \cite{parvez2017survey}. Furthermore, massive channel acquisition overhead becomes the bottleneck for interference coordination in dense wireless networks \cite{shi2015large}. To address this issue, numerous research efforts have been made on channel signaling overhead reduction.
The compressed sensing based approach was developed in \cite{bajwa2010compressed}, yielding good performance with low energy, latency and bandwidth cost. The recent proposal of topological interference alignment \cite{shi2016low} serves as a promising way to manage the interference based only on the network connectivity information at the transmitters. Furthermore, by equipping a large number of antennas at the base stations, massive MIMO \cite{rusek2013scaling} can manage the interference without channel estimation at the transmitters. However, all the methods \cite{shi2016low,rusek2013scaling} still assume that the channel state information (CSI) is available for signal detection at the receivers.

More recently, a new proposal has emerged, namely, the mixture of blind deconvolution and demixing \cite{ling2015blind}, i.e., \emph{blind demixing} for brief, regarded as a promising solution to support the efficient low-latency communication without channel estimation at both transmitters and receivers. It also meets the demands for sporadic and short messages in next generation wireless networks \cite{jung2017blind}. In particular, blind deconvolution is a problem of estimating two unknown vectors from their convolution, which can be exploited in the context of channel coding for multipath channel protection \cite{ahmed2014blind}. However, the results of the blind deconvolution problem \cite{ahmed2014blind} cannot be directly extended to the blind demixing problem since only a single observed vector is available. 
Demixing refers to the problem of identifying multiple structured signals by given the mixture of measurements of these signals, which can be exploited in a secure communications protocol \cite{mccoy2014sharp}. The measurement matrices in the demixing problem are normally assumed to be full-rank matrices to assist theoretical analysis \cite{mccoy2013achievable}. However, the measurement matrices in the blind demixing problem are rank-one matrices \cite{ling2017regularized}, which hamper the extension of results developed in \cite{mccoy2013achievable} to the blind demixing problem.

In this paper, we consider the blind demixing problem in a specific scenario, i.e., an orthogonal frequency
division multiplexing (OFDM) system and propose a low-rank approach to recover the original signals in this problem. However, the resulting rank-constrained optimization problem is known to be non-convex and highly intractable. A growing body of literature has proposed marvelous algorithms to deal with low-rank problems. In particular, convex relaxation approach is an effective way to solve this problem with theoretical guarantees \cite{ling2015blind}. However, it is not scalable to the medium- and large-scale problems due to the high iteration cost of the convex programming technique. To enable scalability, non-convex algorithms (e.g, regularized gradient descent algorithm \cite{ling2017regularized} and iterative hard thresholding method \cite{strohmer2017painless}), endowed with lower iteration cost, have been developed. However, the overall computational complexity of these algorithms is still high due to the slow convergence rate, i.e., high iteration complexity.

To address the limitations of the existing algorithms for the blind demixing problem, we propose the Riemannian optimization algorithm over a product complex manifold in order to simultaneously reduce the iteration cost and iteration complexity. Specifically, the quotient manifold geometry of the product of complex symmetric rank-one matrices is exploited. This is achieved by equivalently reformulating the original complex asymmetric matrices as Hermitian positive semidefinite (PSD) matrices. To reduce the iteration complexity, the Riemannian gradient descent algorithm and the Riemannian trust-region algorithm are developed to support linear convergence rate and superlinear convergence rate, respectively. By exploiting the benign geometric structure of the blind demixing problem, i.e., symmetric rank-one matrices, the iteration cost can be significantly reduced (the same as the regularized gradient descent algorithm \cite{ling2017regularized}) and is scalable to large-size problem.

In this paper, we prove that, for blind demixing, the Riemannian gradient descent algorithm with spectral initialization can linearly converge to the ground truth signals with high probability provided sufficient measurements. Numerical experiments will demonstrate the algorithmic advantages, performance gains and sample efficiency of the Riemannian optimization algorithms.

\subsection{Related Works}
\subsubsection{Convex Optimization Approach}
To address the algorithmic challenge of the rank-constraint optimization problem, the work \cite{ling2015blind} investigated the nuclear norm minimization method for the blind demixing problem. Although the nuclear norm based approach can solve this problem in polynomial time, the high iteration complexity yielded by the resulting semidefinite program (SDP) limits the scalability of the convex relaxation approach. This motivates the development of non-convex algorithms in order to reduce the iteration cost and simultaneously maintain competitive theoretical recovery guarantees compared with convex methods.
\subsubsection{Non-Convex Optimization Paradigms}
The recent work \cite{ling2017regularized} proposed a non-convex regularized gradient-descent based method with a elegant initialization to solve the blind demixing problem at a linear convergence rate. Another work \cite{strohmer2017painless} implemented thresholding-based methods to solve the demixing problem for general rank-$r$ matrices. The algorithm in \cite{strohmer2017painless} linearly converges to the global minimum with a similar initial strategy in \cite{ling2017regularized}. Even though the iteration cost of the non-convex algorithm is lower than the convex approach, the overall computational complexity is still high due to the slow convergence rate, i.e., high iteration complexity. Moreover, both non-convex algorithms require careful initialization to achieve desirable performance.

To address the above limitations of the existing algorithms, we develop the Riemannian optimization algorithms to simultaneously reduce
the iteration cost and iteration complexity.
However, most of current developed Riemannian optimization algorithms for low-rank optimization problem \cite{shi2016low,boumal2011rtrmc} are developed in real space with respect to a single optimization variable. Recently, a Riemannian steepest descent method is developed to solve the blind deconvolution problem \cite{huang2017blind}, where the regularization is needed to provide statistical guarantees. In the blind demixing problem, the following coupled challenges arise due to \emph{multiple complex asymmetric} variables:
\begin{itemize}
	\item Constructing product Riemannian manifold for the multiple complex asymmetric rank-one matrices.
	\item Developing the Riemannian optimization algorithm on the complex product manifold.
\end{itemize} 
Therefore, it is crucial to address these unique challenges to solve the blind demixing problem via the Riemannian optimization technique.
\subsection{Contributions}
The major contributions of the paper are summarized as follows:
\begin{enumerate}
	\item We present a novel blind demixing approach to support low-latency communication in an orthogonal frequency
	division multiplexing (OFDM) system, thereby recovering the information signals without channel estimation. A low-rank approach is further developed to solve the blind demixing problem. 
	\item To efficiently exploit the quotient manifold geometry of the product of complex symmetric rank-one matrices, we equivalently reformulate the original complex asymmetric matrices to Hermitian positive semidefinite matrices.
	\item To simultaneously reduce the iteration cost and iteration complexity as well as enhance the estimation performance, we develop the scalable Riemannian gradient descent algorithm and Riemannian   trust-region algorithm by exploiting the benign geometric structure of symmetric rank-one matrices. This is achieved by factorizing the symmetric rank-one matrices.
	\item  We prove that, for blind demixing, the Riemannian gradient descent linearly converges to the ground truth signals with high probability provided sufficient measurements. 
\end{enumerate}
 
 \subsection{Organization and Notations}
 The remainder of this paper is organized as follows. The system model and problem formulations are presented in Section \ref{2}. In Section \ref{3}, we introduce the versatile framework of Riemannian optimization on the product manifold. The process of computing optimization related ingredients and the Riemannian optimization algorithms are explicated in Section \ref{4.0}. The theoretical guarantees of the Riemannian gradient descent algorithm is then presented in Section \ref{section:MT}. Numerical results will be illustrated in Section \ref{4}. We further conclude this paper in Section \ref{5}.
 
 Throughout the paper following notions are used. Vectors and matrices are denoted by bold lowercase and bold uppercase letters respectively. Specifically, we let $\{\bm{X}_k\}_{k=1}^s$ be the target matrices and $\{\bm{X}_k^{[t]}\}_{k=1}^s$ be the $t$-th iterate of the algorithms where $s$ is the number of users/devices.
 For a vector $\bm{z}$, $\|\bm{z}\|$ denotes its Euclidean norm. For a matrix $\bm{M}$, $\|\bm{M}\|_*$ and $\|\bm{M}\|_F$ denote its nuclear norm and Frobenius norm respectively. For both matrices and vectors, $\bm{M}^{\mathsf{H}}$ and $\bm{z}^{\mathsf{H}}$ denote their complex conjugate transpose. $\bar{{\bm{z}}}$ is the complex conjugate of the vector $\bm{z}$. $a^*$ is the complex conjugate of the complex constant $a$. The inner product of two complex matrices $\bm{M}_1$ and $\bm{M}_2$ is defined as $\langle\bm{M}_1,\bm{M}_2\rangle = \trace(\bm{M}_1^{\mathsf{H}}\bm{M}_2)$. Let $\bm{I}_L$ denote the identity matrix with size of $L\times L$.

\section{System Model And Problem Formulation}\label{2}	
In this section, we present a blind demixing approach to support the low-latency communication by reducing the channel signaling overhead in orthogonal frequency
division multiplexing (OFDM) system. We develop a low-rank optimization model to recover the original information signal for the blind demixing problem, which, however, turns out to be highly intractable. The Riemannian optimization approach is then motivated to address the computational issue.
\begin{figure}[tb]
	\centering
	\includegraphics[width=0.5\columnwidth]{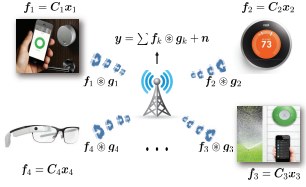}
	\caption{Blind demixing for multi-user low-latency communication systems	without channel estimation.}
	\label{fig:sys}
\end{figure}
\subsection{Orthogonal Frequency Division Multiplexing}
We first briefly introduce the orthogonal frequency division multiplexing (OFDM). The basic idea of OFDM is that by exploiting the eigenfunction property of sinusoids in linear-time-variant (LTI) system, transformation into the frequency domain is a particularly benign way to communicate over frequency-selective channels \cite{tse2005fundamentals}. 

Let $p[n]$ and $\theta[n]$ denote the transmitted  signal and the received signal in the $n$-th time slot, respectively. $q_{\ell}$ denotes the $\ell$-th tap channel impulse response which does not change with $n$. Thus the channel is linear time-invariant. Here, the discrete-time model is given as
\setlength\arraycolsep{2pt} 
\begin{align}
{\theta}[n] = \sum_{\ell = 0}^{L^\prime-1}{q}_{\ell}{p}[n-\ell],
\end{align}
where $L^\prime$ is a finite number of non-zero taps. However, the sinusoids are not eigenfunctions when we transmit data symbols $\bm{p} = [p[0],p[1],\cdots,p[N_p-1]]^{\top}\in\mathbb{C}^{N_p}$ over only a finite duration \cite{tse2005fundamentals}. To restore the eigenfunction property, we add a \emph{cyclic prefix} to $\bm{p}$.  Specifically, we add a prefix of length $L^\prime-1$ consisting of data symbols rotated cyclically:
\begin{align}
\bm{d} = [{p}[N_p-L^\prime+1],\cdots,{p}[N_p-1],{p}[0],{p}[1],\cdots,{p}[N_p-1]]^{\top}\in\mathbb{C}^{N_p+L^\prime-1}.
\end{align}
We only consider the output over the time interval $n\in[L^\prime,N_p+L^\prime-1]$, represented as
\begin{align}\label{linearconv}
{\theta}[n] = \sum_{\ell = 0}^{L^\prime-1}{q}_{\ell}{d}[(n-L^\prime-\ell)~\text{modulo}~ N_p].
\end{align}
Denoting the output of length $N_p$ as
\begin{align}
\bm{\theta} = [\theta[L^\prime],\cdots,\theta[N_p+L^\prime-1]]^{\top},
\end{align}
and the channel impulse as 
$
\bm{q} = [q_0,q_1,\cdots, q_{L^\prime-1},0,\cdots,0]^{\top}\in\mathbb{C}^{N_p}$, thus
 (\ref{linearconv}) can be rewritten as
$\bm{\theta} = \bm{q}\circledast\bm{p},
$
where the notion $\circledast$ denotes the \emph{cyclic convolution}. This method is called \emph{cyclic extension}.
\subsection{System Model}
In orthogonal frequency division multiplexing, we consider a network with one base station and $s$ mobile users, as shown in Fig. \ref{fig:sys}. Specifically, let $\bm{x}_k\in\mathbb{C}^N$ be the original signals of length $N$ from the $k$-th user,                                                                                                                                                                                                                                                                                                 which is usually assumed to be drawn from Gaussian distribution for the convenience of theoretical analysis \cite{ling2015blind,ling2017regularized}. To make the model practical, in contrast, we consider $\bm{x}_k$ as an OFDM signal \cite[Section 2.1]{beek1998synchronization}, which consists of $N$ orthogonal subcarriers modulated by $N$ parallel data streams, represented as
\begin{align}\label{x}
\bm{x} = \bm{F}^{\mathsf{H}}\bm{s},
\end{align} 
where $\bm{F}\in\mathbb{C}^{N\times N}$ is the normalized discrete Fourier transform (DFT) matrix and $\bm{s}\in\mathbb{C}^{N}$ is taken from QAM symbol constellation.
Over $L$ time slots, the transmit signal at the $k$-th user is given by
$
\bm{f}_k = \bm{C}_k\bm{x}_k,
$
where $\bm{C}_k\in\mathbb{C}^{L\times N}$ with $L>N$ is the encoding matrix and available to the base station.
The signals $\bm{f}_k$'s are passing through individual time-invariant channels with impulse responses $\bm{h}_k$'s where $\bm{h}_k\in\mathbb{C}^K$ has a maximum delay of at most $K$ samples. In the OFDM system, this model can be represented in terms of circular convolutions of transmit signals $\bm{f}_k\in\mathbb{C}^L$ with zero-padded channel vector $\bm{g}_k\in\mathbb{C}^L$, i.e.,
\begin{align}
\bm{g}_k = [\bm{h}_k^{\top},0,\cdots,0]^{\top}. 
\end{align}
Hence, the received signal is given as
\begin{align}\label{y}
\bm{z} = \sum\nolimits_{k=1}^{s}\! \bm{f}_k\circledast \bm{g}_k+\bm{n},
\end{align}
where $\bm{n}$ denotes the additive white complex Gaussian noise. Our goal is to recover the original information signals $\{\bm{x}_k\}_{k=1}^s$ from the single observation $\bm{z}$ without knowing channel impulse response $\{\bm{g}_k\}_{k=1}^s$. We call this problem as the \emph{blind demixing} problem.

However, the above information recovery problem is highly intractable without any further structural assumptions. Fortunately, in wireless communication, we can design the encoding matrices $\{\bm{C}_k\}_{k=1}^s$ such that it satisfies ``local" mutual incoherence conditions \cite{ling2015blind}. Specifically, from the practical points of view, we design the encoding matrix $\bm{C}_k$ as a Hadamard-type matrix \cite{ling2017regularized}, represented as
\begin{align}\label{C}
 \bm{C}_k = \bm{F}\bm{D}_k\bm{H},
\end{align}
 where $\bm{F}\in \mathbb{C}^{L\times L}$ is the normalized discrete Fourier transform (DFT) matrix, $\bm{D}_k$'s are independent diagonal binary $\pm 1$ matrices and $\bm{H}\in\mathbb{C}^{L\times N}$ fixed partial deterministic Hadamard matrix.  Furthermore,  due to the physical properties of channel propagation \cite{goldsmith2005wireless}, the impulse response $\bm{g}_k$ is compactly supported \cite{strohmer2017painless}. Here, the size of the compact set of $\bm{g}_k$, i.e., $K$  where $K<L$, is termed as its maximum delay spread from an engineering perspective \cite{strohmer2017painless}. In this paper, we assume that the impulse response $\bm{g}_k$ is not available to both receivers and transmitters during the transmissions in order to reduce the channel signaling overhead \cite{ling2017regularized}.

\subsection{Demixing of Rank-One Matrices}
Let $\bm{B}\in\mathbb{C}^{L\times K}$ consist of the first $K$ columns of $\bm{F}$. Due to the benign property of cyclic extension, it is convenient to represent the formulation (\ref{y}) in the Fourier domain \cite{ling2015blind,tse2005fundamentals}
\begin{align}\label{fdomain}
\bm{y} =\bm{Fz} = \sum_k (\bm{FC}_k\bm{x}_k)\odot\bm{Bh}_k+\bm{F}\bm{n},
\end{align}
where $\odot$ denotes the componentwise product. The first term of (\ref{fdomain}) can be further rewritten as \cite{strohmer2017painless}
$
[(\bm{FC}_k\bm{x}_k)\odot\bm{Bh}_k]_i = (\bm{c}_{ki}^{\mathsf{H}}\bm{x}_k)(\bm{b}_i^{\mathsf{H}}\bm{h}_k) =\langle \bm{c}_{ki}\bar{\bm{b}}_{i}^{\mathsf{H}},\bm{X}_k\rangle,
$
where $\bm{c}^{\mathsf{H}}_{ki} $ denotes the $i$-th row of $\bm{FC}_k$, $\bm{b}_{i}^{\mathsf{H}} $ represents the $i$-th row of $\bm{B}$ and $\bm{X}_k = \bm{x}_k\bar{\bm{h}}_k^{\mathsf{H}}\in\mathbb{C}^{N\times K}$ is a rank-one matrix. Hence, the received signal at the base station can be represented in the Fourier domain as
\begin{align}\label{equ}
\bm{y} =\sum\nolimits_{k = 1}^{s}\! \mathcal{A}_k(\bm{X}_k)+\bm{e},
\end{align}
where the vector $\bm{e} = \bm{Fn}$ and the linear operator $\mathcal{A}_k:\mathbb{C}^{N\times K}\rightarrow\mathbb{C}^L$ is given as \cite{ling2017regularized}
\begin{align}\label{linear_op}
\mathcal{A}_k(\bm{X}_k):=\{\langle \bm{c}_{ki} \bar{\bm{b}}_i ^{\mathsf{H}},\bm{X}_k\rangle\}_{i=1}^L = \{\langle \bm{A}_{ki} ,\bm{X}_k\rangle\}_{i=1}^L,
\end{align}
with $\bm{A}_{ki}=\bm{c}_{ki} \bar{\bm{b}}_i ^{\mathsf{H}}$.
We thus formulate the blind demixing problem as the following low-rank optimization problem:
\begin{eqnarray}
\label{lea_squ}
\mathscr{P}:\mathop{\textrm{minimize }}_{\bm{W}_k,k=1,\cdots,s}&& \Big\|\sum\nolimits_{k=1}^{s}\!\mathcal{A}_k(\bm{W}_k)-\bm{y}\Big\|^2
\nonumber \\
\textrm{subject to}&& \textrm{rank}(\bm{W}_k)= 1,~{k=1,\cdots,s},
\end{eqnarray}
where $\bm{W}_k\in\mathbb{C}^{N\times K}, k = 1,\cdots,s$ and $\{\mathcal{A}_k\}_{k=1}^s$ are known. However, problem $\mathscr{P}$ turns out to be highly intractable due to non-convexity of rank-one constraints. Despite the non-convexity of problem (\ref{lea_squ}), we solve this problem via exploiting the quotient manifold geometry
of the product of complex symmetric rank-one matrices. This is achieved by equivalently reformulating the original complex
asymmetric matrices as the Hermitian positive semidefinite (PSD)
matrices. The first-order and second-order algorithms are further developed on the quotient manifold, which enjoy algorithmic advantages, performance gains and sample efficiency.

\subsection{Problem Analysis}
 To address the algorithmic challenge of problem $\mathscr{P}$, enormous progress has been recently made to develop convex methods \cite{ling2015blind,mccoy2013achievable} and non-convex methods \cite{ling2017regularized,strohmer2017painless}. In this subsection, we will first review the existing algorithms for the blind demixing problem. Then we identify the limitations of state-of-the-art algorithms and develop Riemannian trust-region algorithm to address these limitations. Unique challenges of developing the Riemannian optimization algorithm will be further revealed.
\subsubsection{Convex Relaxation Approach}
A line of literature \cite{ling2015blind} adopted the nuclear norm minimization method to reformulate the problem $\mathscr{P}$ as
\begin{eqnarray}
\label{convex_re}
\mathop{\textrm{minimize }}_{\bm{W}_k,k=1,\cdots,s}&& \sum\nolimits_{k = 1}^{s}\|\bm{W}_k\|_*
\nonumber \\
\textrm{subject to}&& \Big\|\sum\nolimits_{k = 1}^{s}\!\mathcal{A}_k(\bm{W}_k)-\bm{y}\Big\|\leq\varepsilon,
\end{eqnarray}
where $\bm{W}_k\in\mathbb{C}^{N\times K}$ and the parameter $\varepsilon$ is an upper bound of $\|\bm{e}\|$ in (\ref{equ}) and assumed to be known.
While the blind demixing problem can be solved by convex technique provably and robustly under certain situations, the convex relaxation approach is computationally infeasible to the medium-scale or large-scale problems due to the limitations of high iteration cost. This motivates the development of efficient non-convex approaches with lower iteration cost.
\subsubsection{Non-convex Optimization Paradigms}
A line of recent work \cite{ling2017regularized,strohmer2017painless} has developed non-convex algorithms which reduces the iteration cost. In particular, work \cite{strohmer2017painless} solved problem $\mathscr{P}$  via the hard thresholding technique. Specifically, the $t$-th iterate with respect to the $k$-th variable is given by
$
\bm{W}_k^{[t+1]} = \mathcal{F}_r\big(\bm{W}_k^{[t]}+\alpha_k^{[t]}\mathcal{P}_{T_{k,t}}(\bm{G}_k^{[t]})\big),
$
where the hard thresholding operator $\mathcal{F}_r$ returns the best rank-$r$ approximation of a matrix, $\mathcal{P}_{T_{k,t}}(\bm{G}_k^{[t]})$ represents the projection of the search direction to the tangent space $T_{k,t}$, and the stepsize is denoted as
$
\alpha_k^{[t]} ={\|\mathcal{P}_{T_{k,t}}(\bm{G}_k^{[t]})\|_F^2}/{\|\mathcal{A}_k\mathcal{P}_{T_{k,t}}(\bm{G}_k^{[t]})\|_2^2}.
$  
Therein,  $\bm{G}_k^{[t]}$ is defined in \cite{strohmer2017painless}, given by $\bm{G}_k^{[t]} = \mathcal{A}_k^*(\bm{r}^{[t]}),$where $
\bm{r}^{[t]} = \bm{y} - \sum_{k=1}^s \mathcal{A}_k(\bm{W}_k^{[t]}),
\mathcal{A}_k^*(\bm{z}) =\sum_{l=1}^L z_l{\bm{b}}_i \bm{c}_{ki}^{\mathsf{H}}.$

Moreover, matrix factorization also serves as a powerful method to address the low-rank optimization problem. Specifically, \cite{ling2017regularized}  developed an algorithm solving the blind demixing problem based on matrix factorization and regularized gradient descent method. Specifically, problem $\mathscr{P}$ can be rewritten as
\begin{eqnarray}
\label{non-convex}
\mathop{\textrm{minimize }}_{{\bm{u}_k,\bm{v}_k,k=1,\cdots,s}} 
F(\bm{u},\bm{v}):= g(\bm{u},\bm{v})+\lambda R(\bm{u},\bm{v}),
\end{eqnarray}
where $g(\bm{u},\bm{v}):=\|\sum\nolimits_{k=1}^s\mathcal{A}_k(\bm{u}_k\bm{v}_k^{\mathsf{H}})-\bm{y}\|^2$ with $\bm{u}_k\in\mathbb{C}^{N},\bm{v}_k\in\mathbb{C}^{K}$ and the regularizer $R(\bm{u},\bm{v})$ is proposed to force the iterates to lie in the \emph{basin of attraction} \cite{ling2017regularized}.
The algorithm starts from a good initial point and updates the iterates simultaneously: 
\begin{align}
\bm{u}_k^{[t+1]} &= \bm{u}_k^{[t]}-\eta\nabla F_{\bm{u}_k}(\bm{u}_k^{[t]},\bm{v}_k^{[t]}),\quad\\
\bm{v}_k^{[t+1]} &= \bm{v}_k^{[t]}-\eta\nabla F_{\bm{v}_k}(\bm{u}_k^{[t]},\bm{v}_k^{[t]}),
\end{align}
where $\nabla F_{\bm{u}_k}$ denotes the derivative of the objective function (\ref{non-convex}) with respect to $\bm{u}_k$. Although the above non-convex algorithms have low iteration cost, the overall computational complexity is still high due to the slow convergence rate, i.e., high iteration complexity. This motivates to design efficient algorithms to simultaneously reduce the iteration cost and iteration complexity. 
\subsection{Riemannian Optimization Approach}
In this paper, we develop Riemannian optimization algorithms to solve problem $\mathscr{P}$, thereby addressing the limitations of the existing algorithms (e.g., regularized gradient descent algorithm \cite{ling2017regularized}, nuclear norm minimization method \cite{ahmed2014blind} and fast iterative hard thresholding algorithm \cite{strohmer2017painless}) by
\begin{itemize}
	\item Exploiting the Riemannian
	quotient geometry of the product of \emph{complex asymmetric rank-one} matrices to reduce the iteration cost.
	\item Developing scalable Riemannian gradient descent algorithm and Riemannian trust-region algorithm to reduce the iteration complexity.
\end{itemize}

The Riemannian optimization technique has been applied in a wide range of areas to solve rank-constrained problem and achieves excellent performance, e.g., the low-rank matrix completion problem \cite{boumal2011rtrmc,vandereycken2013low,mishra2014fixed}, topological interference management problem \cite{shi2016low} and blind deconvolution \cite{huang2017blind}. However, all of the current Riemannian optimization problems for low-rank optimization problem are developed on the real Riemannian manifold (e.g., real Grassmann manifold \cite{boumal2011rtrmc} and quotient manifold of fixed-rank matrices \cite{shi2016low,vandereycken2013low,mishra2014fixed}) with single non-symmetric variable or the complex Riemannian manifold with single symmetric variable \cite{Yatawatta2013A,Yatawatta2013B} or the complex Riemannian manifold with single non-symmetric variable \cite{huang2017blind}. Thus unique challenges arise due to \emph{multiple complex asymmetric variables} in problem $\mathscr{P}$. In this paper, we propose to construct product Riemannian manifold for the multiple complex asymmetric rank-one matrices, followed by developing the scalable Riemannian optimization algorithms.
 
\section{Riemannian Optimization over Product Manifolds}\label{3}
In this section, to exploit the Riemannian quotient geometry of the product of complex symmetric rank-one matrices \cite{Yatawatta2013A}, we reformulate the original optimization problem on complex asymmetric matrices to the one on Hermitian positive semidefinite (PSD) matrices. The Riemannian optimization algorithms are further developed via exploiting the quotient manifold geometry of the complex product manifold.
\subsection{Product Riemannian Manifold}
Let the manifold $\mathcal{M}$ denote the Riemannian manifold endowed with the Riemannian metric $\bm{g}_{\bm{M}_k}$ where $k\in[s]$ with $[n]=\{1,2,\cdots,n\}$. The set
$
\mathcal{M}^s = \substack{\underbrace{\mathcal{M}\times \mathcal{M}\times \cdots\times \mathcal{M}}\\s}
$ is defined as the set of matrices $(\bm{M}_1,\cdots,\bm{M}_s)$ where $\bm{M}_k\in\mathcal{M},k=1,2\cdots,s$, and is called product manifold. Its manifold topology is identified to the product topology \cite{absil2009optimization}. 

In order to develop the optimization algorithms on the manifold, a notion of length that applies to tangent vectors is needed \cite{absil2009optimization}. By taking the manifold $\mathcal{M}$ as an example, this goal is achieved by endowing each tangent space $T_{\bm{M}}\mathcal{M}$ with a smoothly varying inner product $g_{\bm{M}}(\bm{\zeta}_{\bm{M}},\bm{\eta}_{\bm{M}})$ where $\bm{\zeta}_{\bm{M}},\bm{\eta}_{\bm{M}}\in T_{\bm{M}}\mathcal{M}$.
Endowed with a inner product $g_{\bm{M}}$, the manifold $\mathcal{M}$ is called the \emph{Riemannian manifold}. The smoothly varying inner product is called the \emph{Riemannian metric}. The above discussions are also applied to the product manifold.
Based on the above discussions, we characterize the notion of length on the product manifold via endowing tangent space $T_{\bm{V}}\mathcal{M}^s$ with the smoothly varying inner product
\begin{align}\label{vmetric}
g_{\bm{V}}(\bm{\zeta_V},\bm{\eta_V}):=\sum\nolimits_{k=1}^{s}g_{\bm{M}_k}(\bm{\zeta}_{\bm{M}_k},\bm{\eta}_{\bm{M}_k}). 
\end{align} 
Thus, with $\mathcal{M}$ as the Riemannian manifold, the product manifold $\mathcal{M}^s$ is also called Riemannian manifold, endowed with the Riemannian metric $g_{\bm{V}}$. 
\subsection {Handling Complex Asymmetric Matrices}
To handle complex asymmetric matrices, we propose a linear map which is exploited to convert the optimization variables to the Hermitian positive semidefinite matrix.  Let $\mathbb{S}^{(N+K)}_+$ denote the set of Hermitian positive semidefinite matrices.
Define the linear map $\mathcal{J}_k:\mathbb{S}^{(N+K)}_+\rightarrow\mathbb{C}^L$ and a Hermitian positive semidefinite (PSD) matrix $\bm{Y}_k$ such that $[\mathcal{J}_k(\bm{Y}_k)]_i =  \langle \bm{J}_{ki},\bm{Y}_k\rangle$ with $\bm{Y}_k\in\mathbb{S}_+^{(N+K)}$ and $\bm{J}_{ki}$ as
\begin{align}\label{B}
\bm{J}_{ki} =\left[{\bm{0}_{N\times N} \atop\bm{0}_{K\times N}} {\bm{A}_{ki}\atop \bm{0}_{K\times K}}\right]\in\mathbb{C}^{(N+K)\times (N+K)},
\end{align}where $\bm{A}_{ki}$ is given in (\ref{linear_op}).
Note that 
\begin{align}\label{J_M}
[\mathcal{J}_k(\bm{M}_k)]_i =\langle \bm{J}_{ki},\bm{M}_k\rangle = \langle \bm{A}_{ki},\bm{x}_k\overline{\bm{h}}_k^{\mathsf{H}}\rangle
\end{align} 
where $\bm{M}_k = \bm{w}_k\bm{w}_k^{\mathsf{H}}$ with $\bm{w}_k=[\begin{matrix}
\bm{x}_k^{\mathsf{H}}~
\overline{\bm{h}}_k^{\mathsf{H}}
\end{matrix}
]^{\mathsf{H}}\in\mathbb{C}^{N+K}$. Hence, problem $\mathscr{P}$ can be equivalently reformulated as the following optimization problem on the set of Hermitian positive semidefinite matrices
\begin{eqnarray}
\label{lea_squ_sdp}
\mathop{\textrm{minimize }}_{\bm{M}_k,{k = 1,\cdots,s}}&& \Big\|\sum\nolimits_{k=1}^{s}\!\mathcal{J}_k(\bm{M}_k)-\bm{y}\Big\|^2
\nonumber \\
\textrm{subject to}&& \textrm{rank}(\bm{M}_k)= 1,~{k=1,\cdots,s},
\end{eqnarray}
where $\bm{M}_k\in\mathbb{S}^{(N+K)}_+$. Based on equality (\ref{J_M}), we know that the top-right $N\times K$ submatrix of the estimated matrix $\hat{\bm{M}}_k$ in problem (\ref{lea_squ_sdp}) is corresponding to the estimated matrix $\hat{\bm{W}}_k$ in problem (\ref{lea_squ}). Furthermore, we define $\bm{V}=\{\bm{M}_k\}_{k=1}^s\in\mathcal{M}^s$, where
$\mathcal{M}$ denotes the manifold encoded with the rank-one
matrices and $\mathcal{M}^s$ represents the product of $s$ manifolds $\mathcal{M}$.
By exploiting the Riemannian manifold geometry, the
rank-constrained optimization problem (\ref{lea_squ_sdp}) can be transformed
into the following unconstrained optimization problem over the search space of the product manifold $\mathcal{M}^s$:
\begin{eqnarray}\label{manifold_opt}
\mathop{\textrm{minimize }}_{\bm{V}=\{\bm{M}_k\}_{k=1}^s} f(\bm{V}):=\Big\|\sum\nolimits_{k = 1}^{s}\!\mathcal{J}_k( \bm{M}_k)-\bm{y}\Big\|^2.
\end{eqnarray}
Note that the theoretical advantages of the symmetric transformation have been recently revealed in \cite{ge2017no} for the low-rank optimization problems in machine learning and high-dimensional statistics. Furthermore, by factorizing the Hermitian positive semidefinite matrix $\bm{M}_k = \bm{w}_k\bm{w}_k^{\mathsf{H}}$, $k = 1,\cdots,s$, it yields only $s$ vector variables, i.e., $\bm{w}_k\in\mathbb{ C}^{N+K}$, for optimization problem (\ref{manifold_opt}), which simplifies the derivation of optimization related ingredients.

\subsection{Quotient Manifold Space}
The main idea of Riemannian optimization for rank-constrained problem is based on matrix factorization \cite{mishra2014fixed,Yatawatta2013A}. In particular, the factorization $\bm{M}_k= \bm{w}_k\bm{w}_k^{\mathsf{H}}$ where $\bm{w}_k\in\mathbb{C}^{N+K}$ is prevalent in dealing with rank-one Hermitian positive semidefinite matrices \cite{Yatawatta2013A,Yatawatta2013B}. This factorization also takes advantages of lower-dimensional search space \cite{boumal2011rtrmc} over the other general forms of matrix factorization for rank-one matrices \cite{mishra2014fixed}.
However, the factorization $\bm{M}_k = \bm{w}_k\bm{w}_k^{\mathsf{H}}$ is not unique because the transformation $\bm{w}_k\mapsto a_k\bm{w}_k$ leaves the matrix $\bm{w}_k\bm{w}_k^{\mathsf{H}}$ unchanged, where $a_k\in\mathrm{SU}(1):=\{a_k\in\mathbb{C}: a_ka_k^{*} = a_k^{*}a_k = 1\}$ and $\mathrm{SU}(1)$ is the special unitary group of degree $1$. In particular, the non-uniqueness yields a profound affect on the performance of second-order optimization algorithms which require non-degenerate critical points. To address this indeterminacy, we encode the transformation $\bm{w}_k\mapsto a_k\bm{w}_k$ where $k=1,2,\cdots ,s$, in an abstract search space to construct the equivalence class:
 \begin{align}\label{equ2}
[\bm{M}_k]= \{a_k\bm{w}_k:a_ka_k^{*} = a_k^{*}a_k = 1, a_k\in\mathbb{C}\}.
\end{align}
The product of $[\bm{M}_k]$'s yields the equivalence class
\begin{align}\label{equclass}
	[\bm{V}]  &= \{[\bm{M}_k]\}_{k=1}^s.
\end{align}  
The set of equivalence classes \eqref{equclass} is denoted as $\mathcal{M}^s/\sim$, called the \emph{quotient space} \cite{Yatawatta2013A}. Since the quotient manifold $\mathcal{M}^s/\sim$ is an abstract space, in order to execute the optimization algorithm, the matrix representations defined in the computational space are needed to represent corresponding abstract geometric objects in the abstract space \cite{absil2009optimization}. We denote an element of the quotient space $\mathcal{M}^s/\sim$ by $\tilde{\bm{V}}$ and its matrix representation in the computational space $\mathcal{M}^s$ by $\bm{V}$. Therefore, there is $\tilde{\bm{V}} = \pi(\bm{V})$ and $[\bm{V}] = \pi^{-1}(\pi({\bm{V}}))$, where the mapping $\pi:\mathcal{M}^s\rightarrow\mathcal{M}^s/\sim$ is called the natural projection constructing the map between the geometric ingredients of the computational space and the ones of the quotient space.
\subsection{The Framework of Riemannian Optimization}
To develop the Riemannian optimization algorithms over the quotient space, the geometric concepts in the abstract space $\mathcal{M}^s/\sim$ call for the matrix representations in the computational space $\mathcal{M}^s$. Specifically, several classical geometric concepts in the Euclidean space are required to be generalized to the geometric concepts on the manifold, such as the notion of length to the Riemannian metric, set of directional derivatives to the tangent space and motion along geodesics to  the retraction. We first present the general Riemannian optimization developed on the product manifold in this subsection. The details on the derivation of concrete optimization-related ingredients will be introduced in the next section.

Based on the Riemannian metric (\ref{vmetric}), the tangent space $T_{\bm{V}}\mathcal{M}^s$ can be decomposed into two complementary vector spaces, given as \cite{absil2009optimization}
\begin{align}\label{complen}
T_{\bm{V}}\mathcal{M}^s = \mathcal{V}_{\bm{V}}\mathcal{M}^s\oplus\mathcal{H}_{\bm{V}}\mathcal{M}^s,
\end{align}
where $\oplus$ is the direct sum operator. Here, $\mathcal{V}_{\bm{V}}\mathcal{M}^s$ is the \emph{vertical space} where directions of vectors are tangent to the set of equivalence class (\ref{equclass}). $\mathcal{H}_{\bm{V}}\mathcal{M}^s$ is the \emph{horizontal space} where the directions of vectors are orthogonal to the equivalence class (\ref{equclass}). Thus the tangent space $T_{\tilde{\bm{V}}}(\mathcal{M}^s/\sim)$ at the point $\tilde{\bm{V}}\in \mathcal{M}^s/\sim$ can be represented by the horizontal space $\mathcal{H}_{\bm{V}}\mathcal{M}^s$ at point $\bm{V}\in \mathcal{M}^s$. In particular, the matrix representation of $\eta_{\tilde{\bm{V}}}\in T_{{\tilde{\bm{V}}}}(\mathcal{M}^s/\sim)$, which is also called the \emph{horizontal lift} of $\eta_{\tilde{\bm{V}}}\in T_{{\tilde{\bm{V}}}}(\mathcal{M}^s/\sim)$ \cite[Section 3.5.8]{absil2009optimization}, can be represented by a unique element $\bm{\eta}_{\bm{V}}\in\mathcal{H}_{\bm{V}}\mathcal{M}^s$.
In addition, for every $\bm{\xi}_{\bm{V}},\bm{\eta}_{\bm{V}}\in T_{\bm{V}}\mathcal{M}^s$, based on the Riemannian metric $g_{\bm{V}}({\bm{\zeta_V}},{\bm{\eta_V}})$ (\ref{vmetric}),
\begin{align}\label{metric_q}
g_{\tilde{\bm{V}}}(\bm{\zeta}_{\tilde{\bm{V}}},\bm{\eta}_{\tilde{\bm{V}}}):=  g_{\bm{V}}(\bm{\zeta}_{\bm{V}},\bm{\eta}_{\bm{V}})
\end{align}
defines a Rimannian metric on the quotient space $\mathcal{M}^s/\sim$ \cite[Section 3.6.2]{absil2009optimization}, where $\bm{\zeta}_{\tilde{\bm{V}}},\bm{\eta}_{\tilde{\bm{V}}}\in T_{\tilde{\bm{V}}}\mathcal{M}^s$. Endowed with the Riemannian metric (\ref{metric_q}), the natural projection map $\pi:\mathcal{M}^s\rightarrow\mathcal{M}^s/\sim$ is a \emph{Riemannian submersion} from the quotient manifold $\mathcal{M}^s/\sim$ to the computational space $\mathcal{M}^s$  \cite[Section 3.6.2]{absil2009optimization}. 
Based on the Riemannian submersion theory, several objects on the quotient manifold can be represented by corresponding objects in the computational space.

 Based on the aforementioned framework, the Riemannian optimization over the computational space $\mathcal{M}^s$ can be briefly described as follows. First, search the directions $\bm{\eta}_{\bm{V}}$ on the horizontal space $\mathcal{H}_{\bm{V}}\mathcal{M}^s$. Then, retract the directional vector $\bm{\eta}_{\bm{V}}$ onto the manifold $\mathcal{M}^s$ via the mapping $\mathcal{R}_{\bm{V}}:\mathcal{H}_{\bm{V}}\mathcal{M}^s\rightarrow\mathcal{M}^s$ called \emph{retraction}.  Since the manifold topology of the product manifold is equivalent to the product topology \cite[Section 3.1.6]{absil2009optimization}, i.e., the aforementioned optimization operations can be handled individually, the Riemannian optimization developed on product manifold $\mathcal{M}^s$ can be individually processed on the manifold $\mathcal{M}$.

 Specifically, the tangent space $T_{\bm{V}}\mathcal{M}^s$ can be termed as the product of the tangent spaces $T_{\bm{M}_k}\mathcal{M}$ for $k=1,2,\cdots,s$. In the context of individual Riemannian manifolds $\mathcal{M}$, the tangent space $T_{\bm{M}_k}\mathcal{M}$ can be decomposed into two complementary vector space in the sense of the Riemannian metric $g_{\bm{M}_k}$, given as 
$
 T_{\bm{M}_k}\mathcal{M} = \mathcal{V}_{\bm{M}_k}\mathcal{M}\oplus\mathcal{H}_{\bm{M}_k}\mathcal{M},
$
 where vectors on the horizontal space $\mathcal{H}_{\bm{M}_k}\mathcal{M}$ are orthogonal to the equivalence class (\ref{equ2}).
For $k = 1,2,\cdots,s$, we individually the search direction $\bm{\eta}_{\bm{M}_k}$ on the horizontal space $\mathcal{H}_{\bm{M}_k}\mathcal{M}$ and individually retract the directions to the manifold $\mathcal{M}$ via the retraction mapping $\mathcal{R}_{\bm{M}_k}$.
 To sum up, the schematic viewpoint of a Riemannian optimization algorithm developed on the product manifold via element-wise extension is illustrated in Fig. \ref{fig:deri}.
\begin{figure}[tb]
	\centering
	\includegraphics[width=0.6\columnwidth]{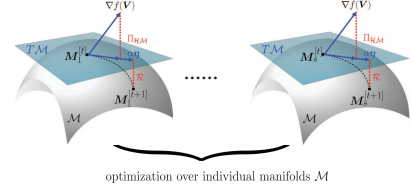}
	\caption{Graphical representation of the concept of Riemannian optimization over individual Riemannian manifolds.}
	\label{fig:deri}
\end{figure}

\section{Optimization Related Ingredients and Algorithms}\label{4.0}
In this section, we provide the optimization related ingredients on product manifold $\mathcal{M}^s$ via the elementwise extension of the optimization related ingredients on the individual Riemannian manifold $\mathcal{M}$. The optimization problem is reformulated via matrix factorization $\bm{M}_k = \bm{w}_k\bm{w}_k^{\mathsf{H}}$, given as
\begin{align}\label{manifold}
\mathop{\textrm{minimize }}_{\bm{v}=\{\bm{w}_k\}_{k=1}^s} f(\bm{v}):=\Big\|\sum\nolimits_{k = 1}^{s}\mathcal{J}_k( \bm{w}_k\bm{w}_k^{\mathsf{H}})-\bm{y}\Big\|^2,
\end{align}
where $\bm{w}_k\in\mathbb{C}^{N+K}$, with $k=1,\cdots,s$. Therefore, according to (\ref{J_M}), the estimated information signal $\hat{\bm{x}}_k$ is represented by the first $N$ rows of the estimated $\hat{\bm{w}}_k$. Compared with problem (\ref{non-convex}) of $2s$ vector variables, i.e., $\{\bm{u}_k,\bm{v}_k\}_{k=1}^s$, problem (\ref{manifold}) endows with only $s$ vector variables, i.e., $\{\bm{w}_k\}_{k=1}^s$, by taking the advantages of the symmetric transformation. To simplify the notations, we abuse the notions of $\bm{v}\in\mathcal{M}^s$, $\bm{w}_k\in\mathcal{M}$.
For convenience, we only take the $k$-th variables as an example and
express all the optimization related ingredients with respect to the complex vector $\bm{w}_k\in\mathcal{M}:=\mathbb{C}_*^{N+K}$, where the space $\mathbb{C}_*^{n}$ denotes the complex Euclidean space $\mathbb{C}^{n}$ removing the origin. 
\subsection{Optimization Related Ingredients}\label{section:opt_ing}
In order to develop the Riemannian optimization algorithm, we need to derive the matrix representations of Riemannian gradient and Riemannian Hessian. To achieve this goal, corresponding optimization related ingredients are required, which are introduced in the following.

The Riemannian metric $g_{\bm{w}_k}:T_{\bm{w}_k}\mathcal{M}\times T_{\bm{w}_k}\mathcal{M}\rightarrow\mathbb{R}$ is an inner product between the tangent vectors on the tangent space $T_{\bm{w}_k}\mathcal{M}$ and invariable along the set of equivalence (\ref{equ2}), which is given by \cite{Yatawatta2013A}
\begin{align}\label{metric}
g_{\bm{w}_k}(\bm{\zeta}_{\bm{w}_k},\bm{\eta}_{\bm{w}_k}) = 
\textrm{Tr}(\bm{\zeta}^{\mathsf{H}}_{\bm{w}_k}\bm{\eta
		}_{\bm{w}_k}+\bm{\eta}^{\mathsf{H}}_{\bm{w}_k}\bm{\zeta}_{\bm{w}_k}), 
\end{align}
where $\bm{\zeta}_{\bm{w}_k},\bm{\eta}_{\bm{w}_k}\in T_{\bm{w}_k}\mathcal{M}
$. Based on the Riemannian metric (\ref{metric}), we further introduce the horizontal space, i.e.,  the matrix representation of the tangent space $T_{\tilde{\bm{w}}_k}(\mathcal{M/\sim})$, the horizontal space projection, which is vital to the derivations of Riemannian gradient and Riemannian Hessian.  \begin{proposition}[Horizontal Space]\label{horizontal_p}
	The quotient manifold $\mathcal{M}/\sim$ admits a horizontal space 
	$\mathcal{H}_{\bm{w}_k}\mathcal{M} \stackrel{\Delta}{=} \lbrace \bm{\eta}_{\bm{w}_k}\in
	\mathbb{C}^{N+K}: \bm{\eta}_{\bm{w}_k}^{\mathsf{H}}\bm{w}_k= \bm{w}_k^{\mathsf{H}}\bm{\eta}_{\bm{w}_k} \rbrace,$ which is the complementary subspace of $\mathcal{V}_{\bm{w}_k}\mathcal{M}$ with respect to the Riemannian metric (\ref{metric}), providing the matrix representation of the abstract tangent space $T_{\tilde{\bm{w}}_k}(\mathcal{M/\sim})$.
	
\end{proposition}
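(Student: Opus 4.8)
The plan is to identify the vertical space $\mathcal{V}_{\bm{w}_k}\mathcal{M}$ explicitly, then characterize $\mathcal{H}_{\bm{w}_k}\mathcal{M}$ as its orthogonal complement under the Riemannian metric (\ref{metric}), and finally verify that the two spaces are genuinely complementary.

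First I would compute the vertical space as the tangent space to the equivalence class (\ref{equ2}). Since $\mathrm{SU}(1) = \{a_k\in\mathbb{C}: |a_k|=1\}$ is the unit circle, the orbit of $\bm{w}_k$ is the curve $a_k\mapsto a_k\bm{w}_k$. Differentiating the one-parameter subgroup $t\mapsto e^{it}\bm{w}_k$ at $t=0$ yields the tangent direction $i\bm{w}_k$, so that $\mathcal{V}_{\bm{w}_k}\mathcal{M} = \{\alpha\, i\bm{w}_k : \alpha\in\mathbb{R}\} = \mathbb{R}\, i\bm{w}_k$, a real one-dimensional subspace of the real $2(N+K)$-dimensional tangent space $T_{\bm{w}_k}\mathcal{M} = \mathbb{C}^{N+K}$.

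Next, by definition the horizontal space is the orthogonal complement of $\mathcal{V}_{\bm{w}_k}\mathcal{M}$ with respect to the metric (\ref{metric}), so a vector $\bm{\eta}_{\bm{w}_k}$ lies in $\mathcal{H}_{\bm{w}_k}\mathcal{M}$ if and only if it is orthogonal to the single generator $i\bm{w}_k$, i.e. $g_{\bm{w}_k}(\bm{\eta}_{\bm{w}_k}, i\bm{w}_k) = 0$. The key computation is to expand this condition. Writing the metric as $g_{\bm{w}_k}(\bm{\zeta},\bm{\eta}) = 2\,\mathrm{Re}(\bm{\zeta}^{\mathsf{H}}\bm{\eta})$, which holds because the two trace terms are complex conjugates of the scalar $\bm{\zeta}^{\mathsf{H}}\bm{\eta}$, and using $(i\bm{w}_k)^{\mathsf{H}} = -i\bm{w}_k^{\mathsf{H}}$, I expect to obtain $g_{\bm{w}_k}(\bm{\eta}_{\bm{w}_k}, i\bm{w}_k) = -2\,\mathrm{Im}(\bm{\eta}_{\bm{w}_k}^{\mathsf{H}}\bm{w}_k)$. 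Hence the orthogonality condition reduces to $\mathrm{Im}(\bm{\eta}_{\bm{w}_k}^{\mathsf{H}}\bm{w}_k) = 0$, which asserts that the scalar $\bm{\eta}_{\bm{w}_k}^{\mathsf{H}}\bm{w}_k$ is real. Since the complex conjugate of this scalar is $\bm{w}_k^{\mathsf{H}}\bm{\eta}_{\bm{w}_k}$, reality is exactly the stated condition $\bm{\eta}_{\bm{w}_k}^{\mathsf{H}}\bm{w}_k = \bm{w}_k^{\mathsf{H}}\bm{\eta}_{\bm{w}_k}$, which identifies the set in the proposition with the orthogonal complement.

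Finally I would confirm that $\mathcal{H}_{\bm{w}_k}\mathcal{M}$ is complementary to $\mathcal{V}_{\bm{w}_k}\mathcal{M}$, namely $T_{\bm{w}_k}\mathcal{M} = \mathcal{V}_{\bm{w}_k}\mathcal{M}\oplus\mathcal{H}_{\bm{w}_k}\mathcal{M}$; this follows from nondegeneracy of the real inner product $g_{\bm{w}_k}$, since the orthogonal complement of a one-dimensional subspace has real dimension $2(N+K)-1$ and meets $\mathcal{V}_{\bm{w}_k}\mathcal{M}$ trivially. The main subtlety throughout is bookkeeping the real versus complex structure: the metric is a real-valued inner product on $\mathbb{C}^{N+K}$ viewed as a real vector space, so the vertical space is only one real dimension (spanned by $i\bm{w}_k$) rather than a complex line, and the defining relation $\bm{\eta}_{\bm{w}_k}^{\mathsf{H}}\bm{w}_k = \bm{w}_k^{\mathsf{H}}\bm{\eta}_{\bm{w}_k}$ is a single real linear constraint. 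Getting the factor of $i$ and the $\mathrm{Re}/\mathrm{Im}$ conversion right in the orthogonality computation is the only place where an error could creep in.
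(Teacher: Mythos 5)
Your proposal is correct and follows exactly the standard quotient-manifold construction from Absil et al.\ (Section 3.5) that the paper itself invokes as its entire proof: identify the vertical space as the tangent space to the fiber, here $\mathcal{V}_{\bm{w}_k}\mathcal{M}=\mathbb{R}\,i\bm{w}_k$, and take the horizontal space as its orthogonal complement under $g_{\bm{w}_k}(\bm{\zeta},\bm{\eta})=2\,\mathrm{Re}(\bm{\zeta}^{\mathsf{H}}\bm{\eta})$, which reduces to the reality condition $\bm{\eta}_{\bm{w}_k}^{\mathsf{H}}\bm{w}_k=\bm{w}_k^{\mathsf{H}}\bm{\eta}_{\bm{w}_k}$. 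Your explicit computation (including the dimension count and the sign bookkeeping giving $g_{\bm{w}_k}(\bm{\eta}_{\bm{w}_k},i\bm{w}_k)=-2\,\mathrm{Im}(\bm{\eta}_{\bm{w}_k}^{\mathsf{H}}\bm{w}_k)$) correctly fills in the details the paper leaves to the citation, and it is consistent with the paper's Proposition 2, whose projection subtracts a purely imaginary multiple of $\bm{w}_k$.
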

\textit{Proof:}
The proof is mainly based on \cite[Section 3.5]{absil2009optimization}. 
$\hfill\blacksquare$

\begin{proposition}[Horizontal Space Projection]\label{horizontal_pro} The operator $\Pi_{H_{\bm{w}_k}\mathcal{M}}:T_{\bm{w}_k}\mathcal{M}\rightarrow \mathcal{H}_{\bm{w}_k}\mathcal{M}$ that projects vectors on the tangent space onto the horizontal space is called horizontal space projection. It is given by
	$\label{proH}
	\Pi_{\mathcal{H}_{\bm{w}_k}\mathcal{M}}(\bm{\eta}_{\bm{w}_k}) = \bm{\eta}_{\bm{w}_k}-a\bm{w}_k
	$,
	where $a\in\mathbb{C}$ is derived as 
$
	a = \frac{\bm{w}_k^{\mathsf{H}}\bm{\eta}_{\bm{w}_k}-\bm{\eta}_{\bm{w}_k}^{\mathsf{H}}\bm{w}_k}{2\bm{w}_k^{\mathsf{H}}\bm{w}_k}.$
\end{proposition}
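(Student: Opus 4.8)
The plan is to realize $\Pi_{\mathcal{H}_{\bm{w}_k}\mathcal{M}}$ as the orthogonal projection onto the horizontal space induced, with respect to the Riemannian metric (\ref{metric}), by the splitting $T_{\bm{w}_k}\mathcal{M}=\mathcal{V}_{\bm{w}_k}\mathcal{M}\oplus\mathcal{H}_{\bm{w}_k}\mathcal{M}$. Since $\mathcal{M}=\mathbb{C}_*^{N+K}$ is an open subset of $\mathbb{C}^{N+K}$, every tangent vector $\bm{\eta}_{\bm{w}_k}$ is simply a vector in $\mathbb{C}^{N+K}$, and I would posit the ansatz $\Pi_{\mathcal{H}_{\bm{w}_k}\mathcal{M}}(\bm{\eta}_{\bm{w}_k})=\bm{\eta}_{\bm{w}_k}-a\,\bm{w}_k$ with $a\in\mathbb{C}$, so that $a\bm{w}_k$ is the vertical component being removed. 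The whole argument then reduces to determining $a$ from the two requirements defining the projection: $a\bm{w}_k\in\mathcal{V}_{\bm{w}_k}\mathcal{M}$ and $\bm{\eta}_{\bm{w}_k}-a\bm{w}_k\in\mathcal{H}_{\bm{w}_k}\mathcal{M}$.

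First I would make the vertical space explicit, since the ansatz is only legitimate once we know vertical vectors are multiples of $\bm{w}_k$. The vertical space is tangent to the orbit (\ref{equ2}); differentiating the defining constraint $a_k a_k^{*}=1$ along a curve through $a_k=1$ gives $\dot a_k+\dot a_k^{*}=0$, so that $\mathcal{V}_{\bm{w}_k}\mathcal{M}=\{\,b\bm{w}_k:b\in\mathbb{C},\ b+b^{*}=0\,\}$ is a one-real-dimensional line. Consequently the scalar $a$ in the ansatz is constrained by $a+a^{*}=0$ for $a\bm{w}_k$ to be vertical; this forces $\mathrm{Re}(a)=0$ and leaves a single real degree of freedom.

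Next I would pin down that remaining degree of freedom by enforcing horizontality of the residual through Proposition \ref{horizontal_p}, namely $(\bm{\eta}_{\bm{w}_k}-a\bm{w}_k)^{\mathsf{H}}\bm{w}_k=\bm{w}_k^{\mathsf{H}}(\bm{\eta}_{\bm{w}_k}-a\bm{w}_k)$. Writing $z=\bm{w}_k^{\mathsf{H}}\bm{\eta}_{\bm{w}_k}$, so that $\bm{\eta}_{\bm{w}_k}^{\mathsf{H}}\bm{w}_k=z^{*}$, this collapses to the scalar identity $z^{*}-a^{*}\bm{w}_k^{\mathsf{H}}\bm{w}_k=z-a\,\bm{w}_k^{\mathsf{H}}\bm{w}_k$, i.e. $(a-a^{*})\,\bm{w}_k^{\mathsf{H}}\bm{w}_k=z-z^{*}$. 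Combining this with $a^{*}=-a$ turns the left-hand side into $2a\,\bm{w}_k^{\mathsf{H}}\bm{w}_k$ and yields $a=(z-z^{*})/(2\bm{w}_k^{\mathsf{H}}\bm{w}_k)=(\bm{w}_k^{\mathsf{H}}\bm{\eta}_{\bm{w}_k}-\bm{\eta}_{\bm{w}_k}^{\mathsf{H}}\bm{w}_k)/(2\bm{w}_k^{\mathsf{H}}\bm{w}_k)$, exactly the claimed expression. As a consistency check I would observe that $z-z^{*}$ is purely imaginary, so the derived $a$ automatically satisfies $a+a^{*}=0$; hence $a\bm{w}_k$ is genuinely the vertical part and $\bm{\eta}_{\bm{w}_k}-a\bm{w}_k$ the horizontal part, and the two conditions are mutually consistent. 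An equivalent route, useful to cross-validate, is to project $\bm{\eta}_{\bm{w}_k}$ directly onto the real line $\mathcal{V}_{\bm{w}_k}\mathcal{M}$ via $g_{\bm{w}_k}(\bm{\eta}_{\bm{w}_k},\cdot)/g_{\bm{w}_k}(\cdot,\cdot)$ along a generator of the orbit.

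I do not expect a serious obstacle; the statement is essentially a one-parameter orthogonal-projection computation. The only real subtlety, and the step where a careless treatment goes wrong, is the bookkeeping of the complex scalar against the real metric (\ref{metric}): because it reduces for vectors to $g_{\bm{w}_k}(\bm{\zeta},\bm{\eta})=\bm{\zeta}^{\mathsf{H}}\bm{\eta}+\bm{\eta}^{\mathsf{H}}\bm{\zeta}=2\,\mathrm{Re}(\bm{\zeta}^{\mathsf{H}}\bm{\eta})$, the vertical space has real dimension one rather than complex dimension one, so $a$ must be recognized as purely imaginary. Omitting this constraint leaves $\mathrm{Re}(a)$ undetermined by horizontality alone, since that condition only fixes $\mathrm{Im}(a)$. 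Carefully tracking the relation $\bm{\eta}_{\bm{w}_k}^{\mathsf{H}}\bm{w}_k=(\bm{w}_k^{\mathsf{H}}\bm{\eta}_{\bm{w}_k})^{*}$, which makes the numerator $z-z^{*}$ automatically imaginary, is precisely what renders the two determining conditions compatible and produces the stated formula.
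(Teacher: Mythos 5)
Your proof is correct and follows exactly the standard quotient-manifold derivation that the paper itself invokes only by citation to \cite[Section 3.5]{absil2009optimization}: identify the vertical space as the purely imaginary multiples of $\bm{w}_k$ (by differentiating the unit-modulus constraint of $\mathrm{SU}(1)$ along the orbit), posit the ansatz $\bm{\eta}_{\bm{w}_k}-a\bm{w}_k$, and solve the horizontality condition of Proposition \ref{horizontal_p} for the purely imaginary scalar $a$. Since the paper provides no written derivation, your argument supplies precisely the missing details, and your observation that horizontality alone pins down only $\mathrm{Im}(a)$ while verticality of the removed component forces $\mathrm{Re}(a)=0$ correctly explains why the two defining conditions are compatible and yield the stated formula.
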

\textit{Proof:}
The proof is mainly based on \cite[Section 3.5]{absil2009optimization}.$\hfill\blacksquare$
\subsubsection{Riemannian Gradient}
Let $\tilde{f}:=f\circ\pi$ be the projection of the function $f$ on the quotient space $\mathcal{M}/\sim$.
Consider a point $\bm{w}_k\in\mathcal{M}$ and corresponding point $\tilde{\bm{w}}_k\in\mathcal{M}/\sim$, the matrix representation (horizontal lift) of $\grad_{\tilde{\bm{w}}_k}\tilde{f}\!\in\! T_{\tilde{\bm{w}}_k}(\mathcal{M}/\sim)$， i.e., $\grad_{\bm{w}_k}f\in\mathcal{H}_{\bm{w}}\mathcal{M}$, is required to develop second-order algorithm on the computational space $\mathcal{M}$. Specifically, the Riemannian gradient $\grad_{\bm{w}_k}f$ can be induced from the Euclidean gradient of $f(\bm{v})$ with respect to $\bm{w}_k$. The relationship between them is given by \cite[Section 3.6]{absil2009optimization}
\begin{align}\label{rmgrad}
g_{\bm{w}_k}(\bm{\eta}_{\bm{w}_k},\grad_{\bm{w}_k}f) = \nabla_{\bm{w}_k}f(\bm{v})[\bm{\eta}_{\bm{w}_k}],
\end{align}
where the directional vector is $\bm{\eta}_{\bm{w}_k}\in\mathcal{H}_{\bm{w}_k}\mathcal{M}$ and 
$
\nabla_{\bm{w}_k}f(\bm{v})[\bm{\eta}_{\bm{w}_k}] = \mathop{\textrm{lim}}_{t\rightarrow 0}\frac{f(\bm{v})|_{\bm{w}_k +t\bm{\eta}_k}-f(\bm{v})|_{\bm{w}_k}}{t}.
$
Thus the Riemannian gradient is given by
\begin{align}\label{rgrad}
\grad_{\bm{w}_k}f = \Pi_{\mathcal{H}_{\bm{w}_k}\!\mathcal{M}}(\frac{1}{2}\nabla_{\bm{w}_k}f(\bm{v})),
\end{align}
where $\Pi_{\mathcal{H}_{\bm{w}_k}\!\mathcal{M}}(\cdot)$ is the horizontal space projector operator.
In particular, the Euclidean gradient of $f(\bm{v})$ with respect to $\bm{w}_k$ is represented as
\begin{align}\label{egrad}
\nabla_{\bm{w}_k} f(\bm{v}) = 2\cdot\sum\nolimits_{i = 1}^L(c_{i}\bm{J}_{ki}+c_{i}^{*}\bm{J}_{ki}^{\mathsf{H}})\cdot\bm{w}_k,
\end{align}
where $c_{i} = \sum\nolimits_{k=1}^{s}[\mathcal{J}_k(\bm{w}_k\bm{w}_k^{\mathsf{H}})]_i-y_i$. Details will be demonstrated in Appendix \ref{grad&hess}.
\begin{table*}

	\centering
	\caption{Element-wise Optimization-Related Ingredients For Problem $\mathscr{P}$}\label{table}
	\begin{tabular}{l|l}
	 
		~  &  
		$\mathop{\textrm{minimize }}_{\bm{w}_k\in\mathcal{M}} \|\sum\nolimits_{k = 1}^{s}\mathcal{J}_k( \bm{w}_k\bm{w}_k^{\mathsf{H}})-\bm{y}\|^2$\\
		
		\hline
		
		Computational space $\mathcal{M}$ & $\mathbb{C}_*^{N+K}$ \\	
		Quotient space $\mathcal{M}/\sim$&  $\mathbb{C}_*^{N+K}/\mathrm{SU}(1)$ \\
		Riemannian metric $g_{\bm{w}_k}$ & $g_{\bm{w}_k}({\bm{\zeta_w}}_k,{\bm{\eta_w}}_k) = 
		\textrm{Tr}({\bm{\zeta_w}^{\mathsf{H}}}_k{\bm{\eta
				_w}}_k+{\bm{\eta_w}^{\mathsf{H}}}_k{\bm{\zeta_w}}_k)$ (\ref{metric})
		\\
		Horizontal space $\mathcal{H}_{\bm{w}_k}\mathcal{M}$ & $\bm{\eta}_{\bm{w}_k}\in
		\mathbb{C}^{N+K}: \bm{\eta}_{\bm{w}_k}^{\mathsf{H}}\bm{w}_k= \bm{w}_k^{\mathsf{H}}\bm{\eta}_{\bm{w}_k} $\\
		Horizontal space projection& $\Pi_{\mathcal{H}_{\bm{w}_k}\mathcal{M}}(\bm{\eta}_{\bm{w}_k}) = \bm{\eta}_{\bm{w}_k}-a\bm{w}_k$, $a\in\mathbb{C}$ \\
		
		Riemannian gradient $\textrm{grad}_{\bm{w}_k}f$ & $\textrm{grad}_{\bm{w}}f
		= \frac{1}{2}\nabla_{\bm{w}_k} f(\bm{v})$ (\ref{gradf})
		\\
		Riemannian Hessian $\textrm{Hess}_{\bm{w}_k}f[\bm{\eta}_{\bm{w}_k}]$
		& $\mathrm{Hess}_{\bm{w}_k}f[\bm{\eta_}{\bm{w}_k}] = \Pi_{\mathcal{H}_{\bm{w}_k}\mathcal{M}}(\frac{1}{2}\nabla^2_{\bm{w}_k}f(\bm{v})[\bm{\eta}_{\bm{w}_k}] )$ (\ref{hess})  \\
		Retraction $\mathcal{R}_{\bm{w}_k}:\mathcal{H}_{\bm{w}_k}\mathcal{M} \rightarrow
		\mathcal{M} $
		&       $\mathcal{R}_{\bm{w}_k}(\bm{\eta}_{\bm{w}_k})=\bm{w}_k+\bm{\eta}_{\bm{w}_k}$ (\ref{retraction})

	\end{tabular}
\end{table*}
\subsubsection{Riemannian Hessian}
In order to perform the second-order algorithm, the matrix representation (horizontal lift) of $\mathrm{Hess}_{\tilde{\bm{w}}_k}\tilde{f}[\tilde{\bm{\eta}}_{\bm{w}_k}]\!\in\! T_{\tilde{\bm{w}}_k}\mathcal{M}$, i.e., $\mathrm{Hess}_{\bm{w}_k}f[\bm{\eta}_{\bm{w}_k}]\!\in\!\mathcal{H}_{\bm{w}_k}\mathcal{M}$, needs to be computed via projecting the directional derivative of the Riemannian gradient $\grad_{\bm{w}_k}f$ to the horizontal space $\mathcal{H}_{\bm{w}_k}\mathcal{M}$. 
Based on Propositions \ref{horizontal_p} and \ref{horizontal_pro}, it yields that
\begin{align}\label{hessian}
\textrm{Hess}_{\bm{w}_k}f[\bm{\eta}_{\bm{w}_k}]=\Pi_{\mathcal{H}_{\bm{w}_k}\mathcal{M}}(\nabla_{\bm{\eta}_{\bm{w}_k}}
\textrm{grad}_{\bm{w}_k}f),
\end{align}
where $\textrm{grad}_{\bm{w}_k}f$ (\ref{rmgrad}) denotes the Riemannian gradient and $\nabla_{\bm{\eta}_{\bm{w}_k}} \textrm{grad}_{\bm{w}_k}f$
is the \textit{Riemannian connection}.
In particular, Riemannian connection $\nabla_{\bm{\eta}_{\bm{w}_k}} \bm{\xi}_{\bm{w}_k}$ characterizes the directional derivative of the Riemannian gradient 
and satisfies two properties (i.e., symmetry and invariance of the Riemannian metric) \cite[Section 5.3]{absil2009optimization}. Under the structure of the manifold $\mathcal{M}$, the Riemannian connection is given as
\begin{align}\label{rhess}
\nabla_{\bm{\eta}_{\bm{w}_k}} \bm{\xi}_{\bm{w}_k} = \Pi_{\mathcal{H}_{\bm{w}_k}\mathcal{M}}(D{\bm{\xi}_{\bm{w}_k}} [\bm{\eta}_{\bm{w}_k}]),
\end{align}
where $D{\bm{\xi}_{\bm{w}_k}} [\bm{\eta}_{\bm{w}_k}]$ is the Euclidean directional derivative of $\bm{\xi}_{\bm{w}_k}$ in the direction of $\bm{\eta}_{\bm{w}_k}$.

To derive the Riemannian Hessian (\ref{hessian}), we first compute the directional derivative of Euclidean gradient $\nabla_{\bm{w}_k} f(\bm{v})$ (\ref{egrad}) in the direction of $\bm{\eta}_{\bm{w}_k}\in\mathcal{H}_{\bm{w}_k}\mathcal{M}$, given by
\begin{align}\label{ehess}
\nabla^2_{\bm{w}_k}f(\bm{v})[\bm{\eta}_{\bm{w}_k}] =2\sum\nolimits_{i = 1}^L(b_{i}\bm{J}_{ki}+b_{i}^{*}\bm{J}_{ki}^{\mathsf{H}})\cdot\bm{w}_k+\notag\\
(c_{i}\bm{J}_{ki}+c_{i}^{*}\bm{J}_{ki}^{\mathsf{H}})\cdot\bm{\eta}_{\bm{w}_k},
\end{align}
where
$
b_{i} = \sum\nolimits_{k=1}^s\langle\bm{J}_{ki},\bm{\eta}_{\bm{w}_k}\bm{w}_k^{\mathsf{H}}+\bm{w}_k\bm{\eta}_{\bm{w}_k}^{\mathsf{H}}\rangle.$
According to the formulations (\ref{rgrad}), (\ref{rhess}) and (\ref{ehess}), the Riemannian Hessian is given as
\begin{align}\label{hess}
\mathrm{Hess}_{\bm{w}_k}f[\bm{\eta_}{\bm{w}_k}] = \Pi_{\mathcal{H}_{\bm{w}_k}\mathcal{M}}(\frac{1}{2}\nabla^2_{\bm{w}_k}f(\bm{v})[\bm{\eta}_{\bm{w}_k}] ).
\end{align}
Details will be illustrated in Appendix \ref{grad&hess}.
 To sum up, the element-wise optimization-related ingredients with respect to the manifold $\mathcal{M}$ for the problem (\ref{manifold}) are provided in Table \ref{table}.
 \begin{algorithm}[t]\label{spec_in}
 \footnotesize	
 	\caption{Riemannian gradient descent with spectral initialization}
 	\begin{algorithmic}[1]
 		\renewcommand{\algorithmicrequire}{\textbf{Given:}}
 		\renewcommand{\algorithmicensure}{\textbf{Output:}}
 		
 		\REQUIRE  Riemannian manifold $\mathcal{M}^s$ with optimization-related ingredients, objective function $f$,  $\{\bm{c}_{ij}\}_{1\leq i \leq s, 1\leq j\leq m}$, $\{\bm{b}_j\}_{1\leq j \leq m},$ $\{y_j\}_{1\leq j\leq m}
 		$ and the stepsize $\alpha$.
 		\ENSURE
 		$\bm{v}=\{\bm{w}_k\}_{k=1}^s$
 		\STATE \textbf{Spectral Initialization:}
 		\renewcommand{\algorithmicdo}
 		{\textbf{do in parallel}}
 		\FORALL{$i = 1,\cdots,s$}       
 		\STATE  Let $\sigma_1(\bm{N}_i),~\check{\bm{h}}_i^0$ and $\check{\bm{x}}_i^0$ be the leading singular value, left singular
 		vector and right singular vector of matrix $\bm{N}_i:=\sum_{j=1}^m y_j\bm{b}_j\bm{c}_{ij}^{\mathsf{H}},$
 		respectively.
 		\STATE Set $\bm{w}_i^{[0]} = \left[\begin{matrix}
 		\bm{x}_i^{0}\\\bm{h}_i^{0}
 		\end{matrix}\right]$ where $\bm{x}_i^0 = \sqrt{\sigma_1(\bm{N}_i)} \check{\bm{x}}_i^0$ and $\bm{h}_i^0 = \sqrt{\sigma_1(\bm{N}_i)} \check{\bm{h}}_i^0$.

 		\ENDFOR   
 		\renewcommand{\algorithmicdo}
 		
 		\FORALL{$t = 1,\cdots, T$}
 		\renewcommand{\algorithmicdo}{\textbf{do in parallel}}
 		\FORALL{$i = 1,\cdots,s$}
 		\STATE
 		$\bm{\eta} = -\frac{1}{g_{\bm{w}_k^{[t]}}(\bm{w}_k^{[t]},\bm{w}_k^{[t]})}\grad_{\bm{w}_k^{[t]}}f$
 		\STATE  Update $\bm{w}_{k}^{[t+1]} = \mathcal{R}_{\bm{w}_k^{[t]}}(\alpha_t\bm{\eta})$
 		
 		\ENDFOR
 		\ENDFOR 
 		
 	\end{algorithmic}
 \end{algorithm}
 
 \subsection{Riemannian Optimization Algorithms}
 Based on the optimization related ingredients mentioned in Section \ref{section:opt_ing}, we develop the Riemannian gradient algorithm and Riemannian trust-region algorithm, respectively.
\subsubsection{Riemannian Gradient descent Algorithm}
In the Riemannian gradient descent algorithm, i.e., Algorithm \ref{spec_in}, the search direction is given by $\bm{\eta} = -\grad_{\bm{w}_k^{[t]}}f/{g_{\bm{w}_k^{[t]}}(\bm{w}_k^{[t]},\bm{w}_k^{[t]})}$, where $g_{\bm{w}_k^{[t]}}$ is the Riemannian metric \eqref{metric} and $\grad_{\bm{w}_k^{[t]}}f\in\mathcal{H}_{\bm{w}_k}\mathcal{M}$ is the Riemannian gradient \eqref{rgrad}. Therefore, the sequence of the iterates is given by
 $\bm{w}_{k}^{[t+1]} = \mathcal{R}_{\bm{w}_k^{[t]}}(\alpha_t\bm{\eta})$, where the stepsize $\alpha_t>0$ and
 \begin{align}\label{retraction}
\mathcal{R}_{\bm{w}_k}(\bm{\xi})= \bm{w}_k+\bm{\xi},
 \end{align}
 with $\bm{\xi             }\in\mathcal{H}_{\bm{w}_k}\mathcal{M}$ \cite{Yatawatta2013A}. Here, the retraction map $\mathcal{R}_{\bm{w}_k}:\mathcal{H}_{\bm{w}_k}\mathcal{M} \rightarrow
 \mathcal{M} $ is an approximation of the exponential map that characterizes the motion of ``moving along geodesics on the Riemannian manifold".  More details on computing the retraction are available in \cite[Section 4.1.2]{absil2009optimization}. The statistical analysis of the Riemannian gradient descent algorithm will be provided in the sequel, which demonstrates the linear rate of the proposed algorithm for converging to the ground truth signals.
 \begin{algorithm}[ht]\label{algo1}
 	\caption{Riemannian trust-region algorithm}
 	\begin{algorithmic}[1]
 		\renewcommand{\algorithmicrequire}{\textbf{Given:}}
 		\renewcommand{\algorithmicensure}{\textbf{Output:}}
 		
 		\REQUIRE Riemannian manifold $\mathcal{M}^s$ with Riemannian
 		metric $g_{\bm{v}}$, retraction mapping $\mathcal{R}_{\bm{v}}=\{\mathcal{R}_{\bm{w}_k}\}_{k=1}^s$, objective function $f$ and the stepsize $\alpha$.
 		\ENSURE $\bm{v}=\{\bm{w}_k\}_{k=1}^s$
 		\\ \STATE \textbf{Initialize:} initial point $\bm{v}^{[0]}=\{\bm{w}_k^{[0]}\}_{k=1}^s$, $t=0$   
 		\WHILE{not converged}
 		\renewcommand{\algorithmicdo}{\textbf{do }}
 		\FORALL{$k = 1,\cdots,s$}
 		\STATE  Compute a descent direction $\bm{\eta}$ via implementing trust-region method
 		\STATE  Update $\bm{w}_{k}^{[t+1]} = \mathcal{R}_{\bm{w}_k^{[t]}}(\alpha\bm{\eta})$
 		\STATE  $t=t+1$.

 		\ENDFOR
 		\ENDWHILE	
 	\end{algorithmic}
 \end{algorithm}
\subsubsection{Riemannian Trust-region Algorithm}\label{section:RTR}
 We first consider the setting that searching the direction $\bm{\eta}_{\bm{w}_k}$ on the horizontal space $\mathcal{H}_{\bm{w}_k}\mathcal{M}$, which paves the way to search the direction on the horizontal space $\mathcal{H}_{\bm{V}}\mathcal{M}^s$. At each iteration, let $\bm{w}_k\in\mathcal{M}$, we solve the trust-region sub-problem \cite{absil2009optimization}:
\begin{eqnarray}\label{problem:sub trust-region}
\mathop{\rm{minimize}}_{{\bm{\eta}}_{\bm{w}_k} }&&
{m({\bm{\eta}}_{\bm{w}_k})}\nonumber\\
{\rm{subject~to}}&& g_{\bm{w}_k}({\bm{\eta}}_{\bm{w}_k},{\bm{\eta}}_{\bm{w}_k})\leq
\delta^2,
\end{eqnarray}
where ${\bm{\eta}}_{\bm{w}_k}\in \mathcal{H}_{\bm{w}_k} \mathcal{M}$, $\delta$ denotes the trust-region radius and the cost function
is given by
\begin{align}
m({\bm{\eta}}_{\bm{w}_k}) = g_{\bm{w}_k}({\bm{\eta}}_{\bm{w}_k},\textrm{grad}_{\bm{w}_k}f)+
\frac{1}{2} g_{\bm{w}_k}({\bm{\eta}}_{\bm{w}_k},\textrm{Hess}_{\bm{w}_k}f\left[\bm{\eta}_{\bm{w}_k}\right]),
\end{align}
with $\textrm{grad}_{\bm{w}_k}f$ and $\textrm{Hess}_{\bm{w}_k}f
\left[\bm{\eta}_{\bm{w}_k}\right]$ as the matrix representations of the Riemannian gradient and Riemannian Hessian in the quotient space, respectively. Problem \eqref{problem:sub trust-region} is solved by truncated conjugate gradient method \cite[Section 7.3]{absil2009optimization}. The Riemannian trust-region algorithm is presented in Algorithm \ref{algo1}.

 Let $\bm{w}_k^{[t]}$ denotes the $t$-th iterate. We introduce a quotient to determine whether updating the iterate $\bm{w}_k$ and how to select the trust-region radius implemented in the next iteration. This quotient is given by 
$
\rho_t = (f(\bm{v})\big|_{\bm{w}_k = \bm{w}_k^{[t]}}-f(\bm{v})\big|_{\bm{w}_k = \mathcal{R}_{\bm{w}_k}(\bm{w}_k^{[t]})})/(m({\bm{0}}_{\bm{w}_k^{[t]}})-m({\bm{\eta}}_{\bm{w}_k^{[t]}}))
$ \cite{absil2009optimization}.
The detailed strategy is introduced in the following: reduce the trust-region radius and keep the iterate unchanged, if the quotient $\rho_t$ is extremely small. Expand the trust-region radius and maintain the iterate when the quotient $\rho_t\gg 1$. We update the iterate if and only if the quotient $\rho_t$ is in proper range. The new iterate is given by 
$\bm{w}_{k}^{[t+1]} = \mathcal{R}_{\bm{w}_k^{[t]}}(\bm{\eta}_{\bm{w}_k})
$ \cite{Yatawatta2013A}.

\subsection{Computational Complexity Analysis}
The computational complexity of Algorithm \ref{spec_in} and Algorithm \ref{algo1} depends on the iteration complexity (i.e., number of iterations) and iteration cost of each algorithm. In this subsection, we briefly demonstrate the computational cost in each iteration of the algorithms, which is mainly depends on the computational cost for computing the ingredients listed in Table \ref{table}. More detailed and precise analysis of the iteration complexity of Algorithm \ref{spec_in} and Algorithm \ref{algo1} are left for the future work. 

Note that the linear measurement matrix $\bm{J}_{ki}$ is block and sparse, endowed with computational savings. Define $n = N+K$ and $d = NK$, then the computational cost of these components with respect to manifold $\mathcal{M}$ are showed below.
\begin{enumerate}
	\item Objective function $f(\bm{v})$ (\ref{manifold}): The dominant computational cost comes from computing the terms $\mathcal{J}_k( \bm{w}_k\bm{w}_k^{\mathsf{H}})$, each of which requires a numerical cost of $O(dL)$. Other matrix operations including computing the Euclidean norm of vector of size $L$ involve the computational cost of $O(L)$. Thus the total computational cost of $f(\bm{v})$ is $O(sdL)$.
	\item Riemannian metric $g_{\bm{w}_k}$ (\ref{metric}): The computational cost mainly comes from computing terms $\bm{\zeta}^{\mathsf{H}}_{\bm{w}_k}\bm{\eta}_{\bm{w}_k}$ and $\bm{\eta}^{\mathsf{H}}_{\bm{w}_k}\bm{\zeta}_{\bm{w}_k}$. Each of them requires a numerical cost of $O(n)$. 
	\item Projection on the horizontal space $\mathcal{H}_{\bm{w}_k}\mathcal{M}$ via $\Pi_{\mathcal{H}_{\bm{w}_k}\mathcal{M}}$: As computing the complex value $a$, it involve the multiplications between row vectors of size $n$  and column vectors of size $n$, which costs $O(n)$. Other simple operations (e.g., subtraction and constant multiplication) involve cost of $O(n)$.
	\item Retraction $\mathcal{R}_{\bm{w}_k}$ (\ref{retraction}): The computational cost is $O(n)$.
	\item Riemannian gradient $\textrm{grad}_{\bm{w}_k}f$ (\ref{rgrad}): The dominant computational cost comes from computing multiplications such as $\sum\nolimits_{k=1}^{s}[\mathcal{J}_k(\bm{ww}^{\mathsf{H}})]_i$ and $(c_{i}\bm{J}_{ki}+c_{i}^{*}\bm{J}_{ki}^{\mathsf{H}})\cdot\bm{w}_k$ with the numerical cost of $O(sd)$ and $O(d)$ respectively. Other operations handling with vectors of size $n$ involve the cost of $O(n)$. Therefore, the total computational cost of computing the Riemannian gradient (\ref{rgrad}) is $O(sdL)$.
	\item Riemannian Hessian $\textrm{Hess}_{\bm{w}_k}f[\bm{\eta}_{\bm{w}_k}]$ (\ref{rhess}):
	\begin{itemize}
		\item The directional derivative of Euclidean gradient $\nabla^2_{\bm{w}_k}f(\bm{v})[\bm{\eta}_{\bm{w}_k}]$ (\ref{ehess}): The computational cost for computing this operator is $O(sdL)$, similar as the one of computing $\nabla_{\bm{w}_k} f(\bm{v})$ (\ref{egrad}). 
		\item Projection term: According to the above analysis, the computational cost is $O(n)$.
	\end{itemize}
	All the geometry related operations (e.g., projection and retraction) are of linear computational complexity in $n$, which is computationally efficient. The operations related to the objective problem as well endow with modest computational complexity. By exploiting the admirable geometric structure of symmetric rank-one matrices, the complexity of computing Riemannian Hessian is almost the same as the one of computing Riemannian gradient, which makes second-order Riemannian algorithm yield no extra computational cost compared with first-order algorithms. The numerical results depicted in the next section will demonstrate the computational efficiency of the algorithm.
\end{enumerate}

\section{Main Results}\label{section:MT}
In this section, we provide the statistical guarantees of Riemannian gradient descent, i.e., Algorithm \ref{spec_in}, for the blind demixing problem. 

Without loss of generality, we assume the ground truth $\|\bm{x}_k^\natural\|_2 = \|\bm{h}_k^\natural\|_2$ for $k=  1,\cdots ,s$ and define the condition number $\kappa= \frac{\max_k\|\bm{x}_k^\natural\bm{h}_k^{\natural\mathsf{H}}\|_F}{\min_k \|\bm{x}_k^\natural\bm{h}_k^{\natural\mathsf{H}}\|_F}$ with $\max_k\|\bm{x}_k^\natural\bm{h}_k^{\natural\mathsf{H}}\|_F=1$. Recall the definition of $\bm{w}_k=[\begin{matrix}
\bm{x}_k^{\mathsf{H}}~\bm{h}_k^{\mathsf{H}}
\end{matrix}]^{\mathsf{H}}\in\mathbb{ C}^{N+K}$ and we define the notion $
\bm{v} = [
\bm{w}_1^{\mathsf{H}}\cdots\bm{w}_s^{\mathsf{H}}
]^{\mathsf{H}}\in \mathbb{ C}^{s(N+K)}
$. In practical scenario, the reference symbol for the signal from each user can be exploited to eliminate the ambiguities for blind demixing problem.
In this paper, considering the ambiguities of the estimated signals, we define the discrepancy between the estimate $\bm{v}$ and the ground truth $\bm{v}^{\natural}$ as the distance function, given as
$
\mbox{dist}(\bm{v},\bm{v}^{\natural})=\left(\sum_{i=1}^s\mbox{dist}^2(\bm{v}_i,\bm{v}_i^\natural)\right)^{1/2},
$
where $\mbox{dist}^2(\bm{v}_i,\bm{v}_i^\natural) = \min\limits_{\psi_i\in\mathbb{ C}}({{\|\frac{1}{\overline{\psi_i}}\bm{h}_i-\bm{h}_i^{\natural} \|_2^2+\|\psi_i \bm{x}_i - \bm{x}_i^{\natural}\|_2^2 }})/{d_i}$ for $i = 1,\cdots,s$. Here, $d_i = \|\bm{h}_i^\natural\|^2+\|\bm{x}_i^\natural\|^2$ and each $\psi_i$ is the alignment parameter. In addition, let the incoherence parameter $\mu$ be the smallest number such that
$
\max_{1\leq k \leq s, 1\leq j\leq m}\frac{|\bm{b}^{\mathsf{H}}_j\bm{h}_k^\natural|}{\|\bm{h}_k^\natural\|_2}\leq \frac{\mu}{\sqrt{m}}.
$ The main theorem is presented in the following.
\begin{theorem}\label{mainT}
	Suppose the rows of the encoding matrices, i.e., $\bm{c}_{ij}$'s, follow the i.i.d. complex Gaussian distribution, i.e., $\bm{c}_{ij}\sim \mathcal{N}(0,\frac{1}{2}\bm{I}_N)+i\mathcal{N}(0,\frac{1}{2}\bm{I}_N)$ and the step size obeys $\alpha_t>0$ and $\alpha_t\equiv\alpha\asymp s^{-1}$, then the iterates (including the spectral initialization point) in Algorithm \ref{spec_in} satisfy
	$\mathrm{dist}(\bm{v}^t,\bm{v}^\natural)\leq C_1(1-\frac{\alpha}{16\kappa})^t \frac{1}{\log^2L}
	$
		for all $t\geq 0$ and some constant $C_1>0$, with probability at least $1-c_1L^{-\gamma}-c_1Le^{-c_2K}$ if the number of measurements
	$L\geq C\mu^2s^2\kappa^4\max{\{K,N\}}\log^8 L$ for some constants $\gamma, c_1,c_2>0$ and sufficiently large constant $C>0$.
\end{theorem}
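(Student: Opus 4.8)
The plan is to follow the three-stage template for nonconvex gradient methods with spectral initialization—local geometry, initialization, and an incoherence-preserving induction—adapted to the Riemannian quotient geometry of the product manifold. First I would establish a \emph{local regularity condition} for the Riemannian gradient: within a neighborhood of the ground truth $\bm{v}^\natural$ (measured in the ambiguity-aware metric $\mathrm{dist}(\cdot,\cdot)$ that quotients out the per-user scalings $\psi_i$), the gradient $\grad f$ obeys a one-point restricted-strong-convexity lower bound together with a Lipschitz-type upper bound. Because the horizontal-space projection leaves $\tfrac12\nabla_{\bm{w}_k}f$ unchanged (the Hermitian structure of \eqref{egrad} already satisfies the horizontal condition of Proposition~\ref{horizontal_p}), the Riemannian gradient inherits the concentration of the Euclidean gradient, so the regularity constants can be extracted from Bernstein-type concentration of the random operators $\mathcal{J}_k$ and their adjoints.

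Second, I would analyze the spectral initialization. The goal is to show that $\bm{N}_i=\sum_j y_j\bm{b}_j\bm{c}_{ij}^{\mathsf{H}}$ concentrates around a rank-one matrix aligned with $(\bm{h}_i^\natural,\bm{x}_i^\natural)$; a matrix concentration inequality combined with a Davis--Kahan perturbation bound on its leading singular pair then yields $\mathrm{dist}(\bm{v}^0,\bm{v}^\natural)\lesssim 1/\log^2 L$, placing the initial iterate inside the basin of attraction. This is where the sample-complexity condition $L\gtrsim \mu^2 s^2\kappa^4\max\{K,N\}\log^8 L$ first bites, since the cross terms among the $s$ users inside the single shared observation $\bm{y}$ contribute the $s^2$ factor, and the dependence on $\kappa$ enters through the per-user scaling $\min_k\|\bm{x}_k^\natural\bm{h}_k^{\natural\mathsf{H}}\|_F$.

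Third, and most delicately, I would control the \emph{incoherence} of the iterates by a leave-one-out decoupling argument. For each measurement index $l$ I would construct an auxiliary trajectory $\{\bm{v}^{(l),t}\}$ generated by the same Riemannian gradient descent on a loss omitting the $l$-th sample, so that $\bm{v}^{(l),t}$ is statistically independent of $(\bm{b}_l,\{\bm{c}_{kl}\}_k)$. A simultaneous induction on $t$ then maintains three coupled invariants: (i) the distance bound $\mathrm{dist}(\bm{v}^t,\bm{v}^\natural)\lesssim 1/\log^2 L$, (ii) closeness of the true and leave-one-out trajectories, and (iii) the incoherence $\max_{k,j}|\bm{b}_j^{\mathsf{H}}\bm{h}_k^t|\lesssim \mu/\sqrt{m}$. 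Feeding invariant (iii) back into the regularity condition upgrades the generic contraction into the explicit factor $1-\alpha/(16\kappa)$, and the choice $\alpha\asymp s^{-1}$ ensures each step of the retraction $\mathcal{R}_{\bm{w}_k}$ both decreases $\mathrm{dist}(\cdot,\bm{v}^\natural)$ and preserves the incoherent region; iterating this coupled induction gives the stated linear rate, with all concentration failure events collected into $1-c_1L^{-\gamma}-c_1Le^{-c_2K}$ via a union bound.

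The hard part will be the leave-one-out induction of the third step. Relative to single-variable blind deconvolution, the difficulty is compounded because the $s$ users are coupled only through the single observation $\bm{y}$, so the leave-one-out construction must simultaneously decouple all $s$ factors of the product manifold rather than one, and the union bound over these $s$ coupled inductions must be shown to cost only the stated $s^2$ factor. In addition, the Riemannian horizontal projection $\Pi_{\mathcal{H}_{\bm{w}_k}\mathcal{M}}(\cdot)=\cdot-a\bm{w}_k$ with its alignment-dependent scalar $a$ must be verified not to inflate incoherence, and the per-user alignment parameters $\psi_i$ hidden inside $\mathrm{dist}$ must be tracked consistently through the retraction from one iterate to the next. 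Carrying all three invariants jointly through the Riemannian update, while keeping the probability budget intact, is where the bulk of the technical effort would reside.
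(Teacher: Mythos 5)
Your proposal follows essentially the same route as the paper's proof: exploit the fact that the Wirtinger gradient already lies in the horizontal space so the Riemannian update becomes a scaled Euclidean gradient step (the paper's equation (\ref{update})), establish local geometry in a region of incoherence and contraction (the paper's Lemma~\ref{L1}), show the spectral initialization lands in that region and run an incoherence-preserving induction on the iterates (Lemma~\ref{L3}) to obtain the per-step contraction $1-\alpha/(16\kappa)$ (Lemma~\ref{L2}). The only differences are cosmetic: you phrase the local geometry as a one-point gradient regularity condition where the paper states restricted strong convexity and smoothness of the Riemannian Hessian, and the leave-one-out construction you describe is precisely the machinery behind the paper's induction lemmas, which the paper itself defers to its cited companion work rather than spelling out.
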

\begin{proof}
	Please refer to Appendix \ref{proof_mainT} for details.   
\end{proof}
 Theorem \ref{mainT} demonstrates that number of measurements $\mathcal{O}(s^2\kappa^4\max{\{K,N\}}\log^8 L)$ are sufficient for the Riemannian gradient descent algorithm (with spectral initialization), i.e., Algorithm \ref{spec_in}, to linearly converge to the ground truth signals.

\section{Simulation Results}\label{4}
In this section, we simulate our proposed Riemannian optimization
algorithm for the blind demixing problem in the settings of Hadamard-type encoding matrices and Gaussian encoding matrices to demonstrate the algorithmic advantages and good performance. In the noiseless scenario (i.e., $\bm{e}=0$), we will study the number of measurements necessary for exact recovery in order to illustrate the performance of the proposed algorithm. The convergence rate of different algorithms will be also compared. In the noisy scenario, we study the average relative construction error of different algorithms. The robustness of the proposed algorithm for noisy data is simultaneously demonstrated.
\subsection{Simulation Settings and Performance Metric} 
The simulation settings are given as follows:
\begin{itemize}
\item Ground truth rank-one matrices $\{\hat{\bm{X}}_k\}_{k = 1}^s$: In the OFDM system, the elements of symbol $\bm{q}_k\in \mathbb{R}^N$ are chosen randomly from the integer set $\{0, 1,\cdots, 14,15\}$. The complex vector $\bm{x}_k$ is thus generated according to (\ref{x}), where $\bm{s}\in
\mathbb{C}^N$ is the 16-QAM symbol constellation corresponding to the symbol $\bm{q}$. In the other scenario, entries of the standard complex Gaussian vector $\bm{x}_k$ are drawn i.i.d from the standard normal distribution. With standard complex Gaussian vectors $\bm{h}_k\in\mathbb{C}^{K}$ and the complex vector $\bm{x}_k\in\mathbb{C}^N$, the matrices $\{\hat{\bm{X}}_k\}_{k=1}^s$ are generated as $\{\hat{\bm{X}}_k\}_{k = 1}^s=\{\bm{x}_k\bar{\bm{h}}_k^{\mathsf{H}}\}_{k=1}^s$ \cite{ling2017regularized}. 

\item Measurement matrices $\{\{\bm{J}_{ik}\}_{i=1}^L\}_{k=1}^s$: We generate the normalized discrete Fourier transform (DFT) matrix $\bm{F}\in\mathbb{C}^{L\times L}$ and the Hadamard-type matrix $\bm{C}_k\in\mathbb{C}^{L\times N}$ according to (\ref{C}) for $k = 1,\cdots,s$ and to construct the measurement matrices according to (\ref{linear_op}) and (\ref{B}).

\item Performance metric: The relative construction error is adopted to evaluate the performance of the algorithms, given as \cite{ling2017regularized}
\begin{align}\label{re}
\mathrm{err}(\bm{X})=\frac{\sqrt{\sum\nolimits_{k=1}^{s}\|\bm{X}_k-\hat{\bm{X}}_k\|_F^2}}{\sqrt{\sum\nolimits_{k=1}^s\|\hat{\bm{X}}_k\|_F^2}},
\end{align}
where $\{\bm{X}_k\}_{k=1}^s$ are estimated matrices and $\{\hat{\bm{X}}_k\}$ are ground truth matrices.

\end{itemize}

\begin{figure}[htbp]
	\centering                                        
	\subfigure[Gaussian encoding matrices]{                    \begin{minipage}{7cm}\centering                                                          
	 \includegraphics[scale=0.8]{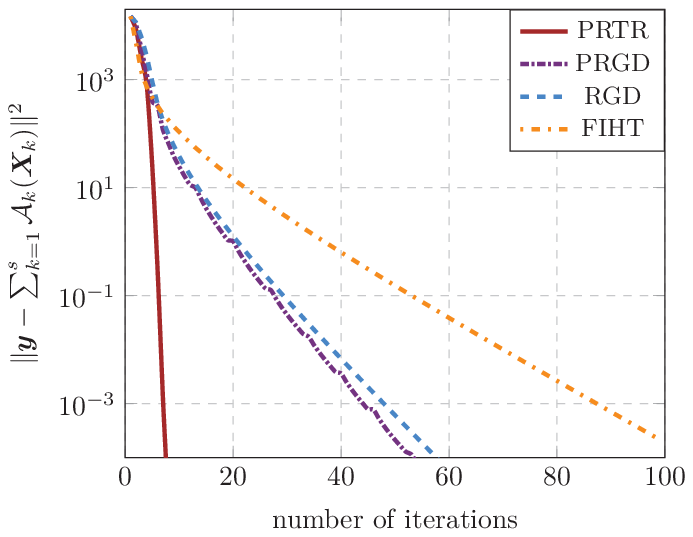}             
	 \end{minipage}\label{fig2_o}}
 \subfigure[Hadamard-type encoding matrices]{                  
 \begin{minipage}{7cm}\centering                                                          \includegraphics[scale=0.8]{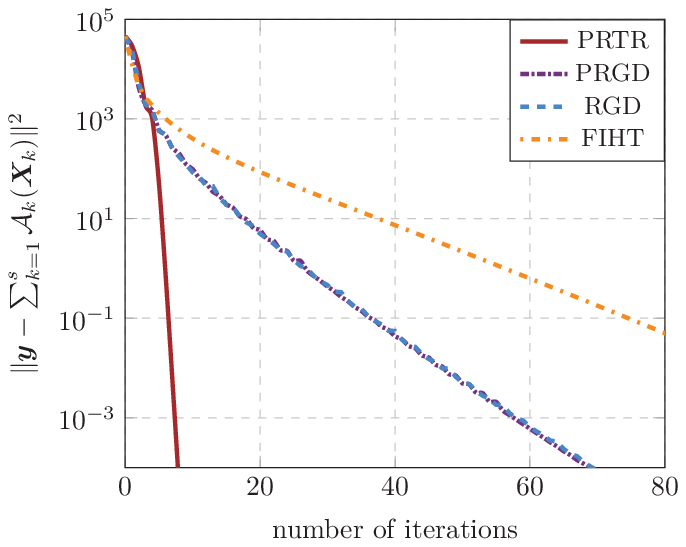}             
 \end{minipage}\label{fig2}}
 \caption{Convergence rate of different algorithms with respect to the number of iterations.}
\end{figure}

\begin{figure}[htbp]
	\centering                                        
	\subfigure[Gaussian encoding matrices]{                    \begin{minipage}{7cm}\centering                                                          
			\includegraphics[scale=0.8]{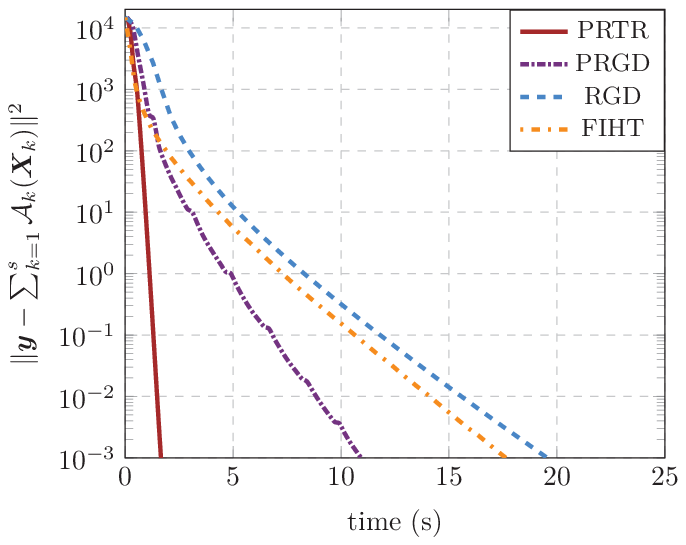}             
	\end{minipage}\label{fig2_1_o}}
\subfigure[Hadamard-type encoding matrices]{                  
		\begin{minipage}{7cm}\centering                                                          \includegraphics[scale=0.8]{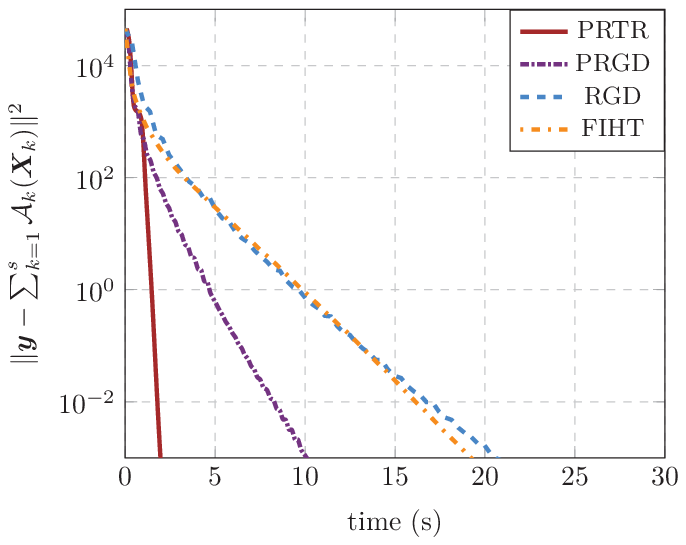}             
	\end{minipage} \label{fig2_1} }
	\caption{Convergence rate of different algorithms with respect to time.}
\end{figure}

\begin{figure}[tb]
	\centering
	\includegraphics[width=0.65\columnwidth]{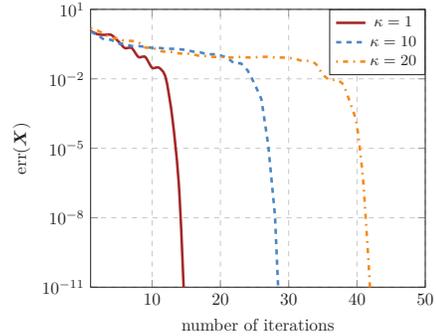}
	\caption{Convergence rate of the proposed trust-region algorithm with respect to different $\kappa$.}
	\label{fig_con}
\end{figure} 
The following five algorithms are compared:
\begin{itemize}
	\item \textbf{Proposed Riemannian trust-region algorithm (PRTR)}: We use the manifold optimization toolbox \emph{Manopt} \cite{manopt} to implement the proposed Riemannian trust-region algorithm (PRTR). The initial trust region radius is $\delta = 2$.
	\item \textbf{Proposed Riemannian gradient descent algorithm (PRGD)}: The proposed Riemannian gradient descent algorithm (PRGD) is implemented via the manifold optimization toolbox \emph{Manopt} \cite{manopt}. 
	\item \textbf{Nuclear norm minimization (NNM)}: The algorithm \cite{ahmed2014blind} is implemented with the toolbox CVX \cite{grant2008cvx} to solve the convex problem (\ref{convex_re}) with the parameter $\varepsilon=10^{-9}$ in noiseless scenario and $\varepsilon=10^{-2}$ in noisy scenario.
	\item \textbf{Regularized gradient descent (RGD)}: This algorithm \cite{ling2017regularized} is implemented to solve the regularized problem (\ref{non-convex}).

	\item \textbf{Fast Iterative Hard Thresholding (FIHT)}: The algorithm \cite{strohmer2017painless} utilizes hard thresholding to solve the rank-constraint problem $\mathscr{P}$ directly.
\end{itemize}

We adopt the initialization strategy in \cite{strohmer2017painless} for all the non-convex optimization algorithms (i.e., PRTR, PRGD, RGD and FIHT).
The PRTR and PRGD algorithm stop when the norm of Riemannian gradient falls below $10^{-8}$ or the number of iterations exceeds $500$. The stopping criteria of RGD and FIHT are based on \cite{ling2017regularized} and \cite{strohmer2017painless}, respectively. 
 \subsection{Convergence Rate}
 Fig. \ref{fig2_o} and Fig. \ref{fig2} illustrate the convergence rate of different non-convex algorithms with respect to the number of iterations in the setting of Gaussian encoding matrices with $N=K=50$, $L=1250$, $s=5$ and in the setting of Hadamard-type encoding matrices with $N=K=50$, $L=1536$, $s=5$, respectively. Under the corresponding settings, Fig. \ref{fig2_1_o} and Fig. \ref{fig2_1} show the convergence rate of different non-convex algorithms  with respect to the time in the settings of Hadamard-type encoding matrices and Gaussian encoding matrices, respectively. From these figures, we can see that in both scenarios, the iteration complexity of the proposed Riemannian gradient descent algorithm is comparable to the regularized gradient descent and has lower time complexity that regularized gradient descent. Moreover, the Riemannian trust-region algorithm, which enjoys superlinear convergence rate, significantly converges faster than the stat-of-the-art non-convex algorithms with respect to both the number of iterations and time. The proposed algorithm thus enjoys low iteration complexity and low time complexity. Next we investigate the impact of the condition number, i.e., $\kappa= \frac{\max\|\hat{X}_k\|_F}{\min \|\hat{X}_k\|_F}$ where $\hat{X}_k$ is the ground truth, on the convergence rate of the proposed Riemannian algorithm. In this simulation, we set $s=2$ and set the first component as $\|\hat{X}_1\|_F = 1$ and the second one as $\|\hat{X}_2\|_F = \kappa$ where $\kappa\in\{1,10,20\}$. Therein, $\kappa = 1$ means both sensors receive the signals with equal power and $\kappa = 10$ means the second sensor has considerably stronger received signals \cite{ling2017regularized}. Fig. \ref{fig_con} demonstrates the relative error (\ref{re}) vs. iterations for the proposed Riemannian trust-region algorithm. It shows that even though large condition number yields slightly slow convergence rate, the proposed Riemannian trust-region algorithm can still precisely recover the original signals in a few iterations. However, the gradient descent algorithm in \cite{ling2017regularized} has less satisfied signal recovery performance when the condition number is large. Therefore, our proposed second-order algorithm is robust to the condition number compared with the first-order algorithm in \cite{ling2017regularized}.

\subsection{Phase Transitions}
In this subsection, we investigate the empirical recovery performance of PRTR without considering the noise and compare the proposed algorithm with other algorithms. In setting of Gaussian encoding matrices, we set $N=K=50$, $L =1000$ with the number of devices $s$ varying from $1$ to $12$. In the setting of Hadamard-type encoding matrices, we set $N=K=16$, $L=1536$ with $s$ varying from $1$ to $45$. For each setting, $10$ independent trails are performed and the recovery is treated as a success if the relative construction error $\mathrm{err}(\bm{X})\leq 10^{-3}$. Fig. \ref{fig1_o} and Fig. \ref{fig1} show the probability of successful recovery for different numbers of devices $s$ in the settings of Hadamard-type encoding matrices and Gaussian encoding matrices, respectively. Based on the phase transitions results in two figures, we can see that the PRTR and PRGD algorithm outperform in terms of guaranteeing exact recovery than other three algorithms. The non-uniqueness of the factorization taken into account in the quotient manifold space plays a vital role to lead this advantage. In particular, the proposed algorithm, i.e., the Riemannian gradient algorithm and the Riemannian trust-region algorithm are more robust to the practical scenario than NNM and FIHT algorithm do.

\begin{figure}[htbp]
	\centering                                        
	\subfigure[Gaussian encoding matrices]{                    \begin{minipage}{7cm}\centering                                                          
			\includegraphics[scale=0.8]{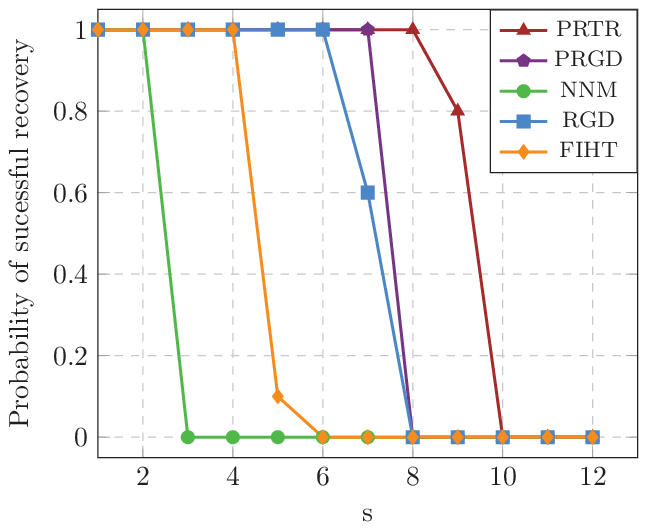}             
	\end{minipage}\label{fig1_o}}	
	\subfigure[Hadamard-type encoding matrices]{                  
		\begin{minipage}{7cm}\centering                                                          \includegraphics[scale=0.8]{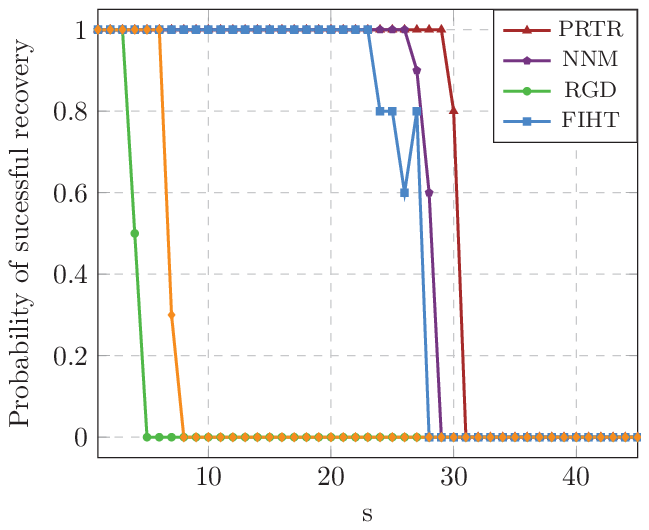}             
	\end{minipage} \label{fig1}} 
	\caption{Probability of successful recovery with different numbers of devices $s$.}
\end{figure}

\begin{figure}[htbp]
	\centering                                        
	\subfigure[Gaussian encoding matrices]{                    \begin{minipage}{7cm}\centering                                                          
			\includegraphics[scale=0.8]{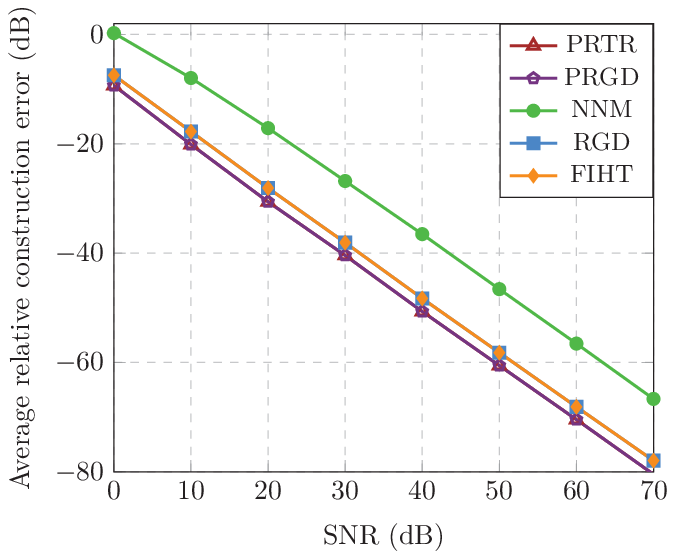}             
	\end{minipage}\label{fig3_o}}	
	\subfigure[Hadamard-type encoding matrices]{                  
		\begin{minipage}{7cm}\centering                                                          \includegraphics[scale=0.8]{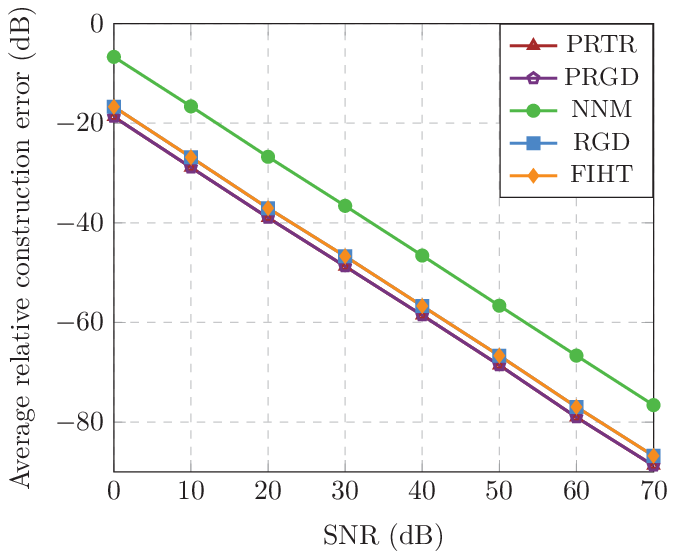}             
	\end{minipage}\label{fig3} } 
	\caption{Average relative construction error vs. SNR (dB).}
\end{figure}

\subsection{Average Relative Construction Error}
We study the average relative construction error of four algorithms and explore the robustness of the proposed Riemannian trust-region algorithm against additive noise by considering the model (\ref{equ}).
We assume the additive noise in the formulation (\ref{equ}) satisfies \cite{strohmer2017painless}
\begin{align}
\bm{e} = \sigma\cdot \|\bm{y}\|\cdot\frac{\bm{\omega}}{\|\bm{\omega}\|},
\end{align}
where $\bm{\omega}\in\mathbb{C}^L$ is a standard complex Gaussian vector. We compare the four algorithms for each level of signal to noise ratio (SNR) $\sigma$ in the setting of Gaussian encoding matrices with $L = 1500$, $N=K=50$, $s=2$ and in the setting of Hadamard-type encoding matrices with $L = 1536$, $N=K=16$ and $s=2$. For each setting, $10$ independent trails are performed and the condition of successful recovery is the same with the one aforementioned.  The average relative construction
error in dB against the signal to noise ratio (SNR) in the settings of Hadamard-type encoding matrices and Gaussian encoding matrices are illustrated in Fig. \ref{fig3_o} and Fig. \ref{fig3}, respectively. It depicts that the relative reconstruction error of the proposed algorithm linearly scales with SNR. We conclude that the PRTR and PRGD algorithm are stable in the presence of noise as the other algorithms. The figure also shows that the proposed algorithm PRTR and PRGD achieve lower average relative construction error than other three algorithms, yielding better performance in the practical scenario. 

The impressive simulation results are in favor of the quotient manifold of the product of complex symmetric rank-one matrices which is established via Hermitian reformulation. By exploiting the geometry of quotient manifold which takes into account the non-uniqueness of the factorization, both first-order and second-order Riemannian optimization algorithm are developed. Specifically, the proposed Riemannian optimization algorithms, i.e., the Riemannian gradient descent algorithm and the Riemannian trust-region algorithm outperform state-of-the-art algorithms in terms of the algorithmic advantages (i.e., fast convergence rate and low iteration cost) and performance (i.e., sample complexity). Moreover, both of them are robust to the noise and the Riemannian trust-region algorithm is also robust to the condition number, i.e., $\kappa= \frac{\max\|\hat{X}_k\|_F}{\min \|\hat{X}_k\|_F}$.
\section{Conclusion and Future Work}\label{5}
In this paper, we presented the blind demixing approach to support low-latency communication without any channel estimation in OFDM system, for which a low-rank modeling approach with respect to rank-one matrices is further developed. To address the unique challenge of multiple asymmetric complex rank-one matrices as well as develop efficient algorithm, we exploited the Riemannian quotient geometry of product of complex symmetric rank-one matrices via reformulating problems on complex asymmetric matrices to problems on Hermitian positive semidefinite (PSD) matrices. Specifically, by exploiting the admirable structure of symmetric rank-one matrices, we developed scaled Riemannian optimization algorithms, i.e., the Riemannian gradient descent algorithm and the Riemannian trust-region algorithm. We proof that the Riemannian gradient descent algorithm linearly converges to ground truth signals with high probability provided sufficient measurement. Simulation results demonstrated that the proposed algorithms are robust to the additive noise and outperforms state-of-the-art algorithms in terms of algorithmic advantages and performance in the practical scenario. Moreover, the Riemannian trust-region algorithm is robust to the condition number.
\appendices

\section{Computing the Riemannian gradient and Riemannian Hessian}\label{grad&hess}
We first reformulate the objective function in (\ref{manifold}) as
\begin{align}
f(\bm{v})& = \sum\nolimits_{i = 1}^{L}(\sum\nolimits_{k=1}^s[\mathcal{J}_k(\bm{w}_k\bm{w}_k^{\mathsf{H}})]_i-y_i)^{*}\notag\\&(\sum\nolimits_{k=1}^s[\mathcal{J}_k(\bm{w}_k\bm{w}_k^{\mathsf{H}})]_i-y_i)=\sum\nolimits_{i=1}^L c_{i}^*\cdot c_{i},
\end{align} 
where $c_{i} = \sum\nolimits_{k=1}^{s}[\mathcal{J}_k(\bm{w}_k\bm{w}_k^{\mathsf{H}})]_i-y_i$.
The partial gradient of $f(\bm{v})$ with respect to the complex vector $\bm{w}_k$ is given as
\begin{align}
\nabla_{\bm{w}_k}f(\bm{v}) =& 2\frac{\partial f(\bm{v})}{\partial\bm{w}_k^{\mathsf{H}}}\notag\\
=&  2\sum\nolimits_{i=1}^L (\sum\nolimits_{k=1}^{s}[\mathcal{J}_k(\bm{w}_k\bm{w}_k^{\mathsf{H}})]_i-y_i)\bm{J}_{ki}\bm{w}_k+\notag\\&(\sum\nolimits_{k=1}^{s}[\mathcal{J}_k(\bm{w}_k\bm{w}_k^{\mathsf{H}})]_i-y_i)^{*}\bm{J}_{ki}^{\mathsf{H}}\bm{w}_k\notag\\
=& 2\cdot\sum\nolimits_{i = 1}^L(c_{i}\bm{J}_{ki}+c_{i}^{*}\bm{J}_{ki}^{\mathsf{H}})\cdot\bm{w}_k.
\end{align}
Furthermore, the Euclidean gradient of $f(\bm{v})$ in the direction $\bm{\eta}_{\bm{w}_k}$ with respect to $\bm{w}_k$ is derived as
\begin{align}
\nabla_{\bm{w}_k}f(\bm{v})[\bm{\eta}_{\bm{w}_k}]  &=\mathop{\textrm{lim}}_{t\rightarrow 0}\frac{f(\bm{v})|_{\bm{w}_k +t\bm{\eta}_k}- f(\bm{v})|_{\bm{w}_k}}{t}\notag\\
& = \trace\Big[\sum\nolimits_{i=1}^L \bm{\eta}_{\bm{w}_k}^{\mathsf{H}}(c_{i}\bm{J}_{ki}+c_{i}^{*}\bm{J}_{ki}^{\mathsf{H}})\bm{w}_k\notag\\&+\bm{w}_k^{\mathsf{H}}(c_{i}\bm{J}_{ki}+c_{i}^{*}\bm{J}_{ki}^{\mathsf{H}})\bm{\eta}_{\bm{w}_k}\Big]\notag\\
& = g_{\bm{w}_k}\big(\bm{\eta}_{\bm{w}_k},\sum\nolimits_{i = 1}^L(c_{i}\bm{J}_{ki}+c_{i}^{*}\bm{J}_{ki}^{\mathsf{H}})\cdot\bm{w}_k\big)\notag\\
& = g_{\bm{w}_k}\big(\bm{\eta}_{\bm{w}_k},\frac{1}{2}\nabla_{\bm{w}_k}f(\bm{v})\big).
\end{align}
Thus, according to (\ref{rmgrad}), there is $\mathrm{grad}_{\bm{w}_k}f = \frac{1}{2} \nabla_{\bm{w}_k}f(\bm{v})$. With the fact that $(\mathrm{grad}_{\bm{w}_k}f)^{\mathsf{H}}\bm{w}_k = \bm{w}_k^{\mathsf{H}}\mathrm{grad}_{\bm{w}_k}f$, we conclude that $\mathrm{grad}_{\bm{w}_k}f$ is already in the horizontal space $\mathcal{V}_{\bm{w}_k}\mathcal{M}$. Therefore, the matrix representation of Riemannian gradient is written as 
\begin{align}\label{gradf}
\mathrm{grad}_{\bm{w}_k}f = \sum\nolimits_{i = 1}^L(c_{i}\bm{J}_{ki}+c_{i}^{*}\bm{J}_{ki}^{\mathsf{H}})\cdot\bm{w}_k.
\end{align}

To compute the Riemannian Hessian, we first derive the directional derivative of Euclidean gradient $\nabla_{\bm{w}_k} f(\bm{v})$ (\ref{egrad}) in the direction of $\bm{\eta}_{\bm{w}_k}\in\mathcal{H}_{\bm{w}_k}\mathcal{M}$, given by
\begin{align}\label{hess_e}
\nabla^2_{\bm{w}_k}f(\bm{v})[\bm{\eta}_{\bm{w}_k}]=& \mathop{\textrm{lim}}_{t\rightarrow 0}\frac{\nabla_{\bm{w}_k} f(\bm{v})|_{\bm{w}_k +t\bm{\eta}_k}-\nabla_{\bm{w}_k} f(\bm{v})|_{\bm{w}_k}}{t}\notag\\
=&\frac{\mathrm{d}}{\mathrm{d}t}\bigg|_{t=0}\nabla_{\bm{w}_k} f(\bm{v})|_{\bm{w}_k+t\bm{\eta}_k}\notag\\
=&2\sum\nolimits_{i = 1}^L(b_{i}\bm{J}_{ki}+b_{i}^{*}\bm{J}_{ki}^{\mathsf{H}})\cdot\bm{w}_k\notag\\&+
(c_{i}\bm{J}_{ki}+c_{i}^{*}\bm{J}_{ki}^{\mathsf{H}})\cdot\bm{\eta}_{\bm{w}_k},
\end{align}
where $b_{i} = \sum\nolimits_{k=1}^s\langle\bm{J}_{ki},\bm{\eta}_{\bm{w}_k}\bm{w}_k^{\mathsf{H}}+\bm{w}_k\bm{\eta}_{\bm{w}_k}^{\mathsf{H}}\rangle$.
Thus, the matrix representation of Riemannian Hessian is derived according to (\ref{hess}).

\section{Proof of Theorem \ref{mainT}}\label{proof_mainT}
According to the definition of the horizontal space in 
Proposition 1, we know that $\nabla_{\bm{w}_k}f(\bm{v})$ is in the horizontal space due to $\nabla_{\bm{w}_k}f(\bm{v})^{\mathsf{H}}\bm{w}_k = \bm{w}_k^{\mathsf{H}}\nabla_{\bm{w}_k}f(\bm{v})$. Based on this fact, the update rule in the Riemannian gradient descent algorithm, i.e., Algorithm \ref{spec_in}, can be reformulated as
\begin{align}\label{update}
\bm{w}_k^{[t+1]}= \bm{w}_k^{[t]}-\frac{\alpha_t}{2\|\bm{w}_k^{[t]}\|_2^2}\nabla_{\bm{w}_k}f(\bm{v})|_{\bm{w}_k^{[t]}},
\end{align}
according to the definition of the Riemannian metric $g_{\bm{w}_k}$ \eqref{metric} and the retraction $\mathcal{R}_{\bm{w}_k}$  \eqref{retraction}. The update rule (\ref{update}) can be further modified as \begin{align}\label{update_re}
\left[~\begin{matrix}
\bm{w}_k^{[t+1]}\\\overline{\bm{w}_k^{[t+1]}}
\end{matrix}~\right] = \left[~\begin{matrix}
\bm{w}_k^{[t]}\\\overline{\bm{w}_k^{[t]}} 
\end{matrix}~\right]-\frac{\alpha_t}{\|\bm{w}_k^{[t]}\|_2^2}\left[~\begin{matrix}
\frac{\partial f}{\partial \bm{w}_k^{\mathsf{H}}}|_{\bm{w}_k^{[t]}}  
\\\overline{\frac{\partial f}{\partial \bm{w}_k^{\mathsf{H}}}}|_{\bm{w}_k^{[t]}}  
\end{matrix}~\right],
\end{align}based on the fact that $\nabla_{\bm{w}_k}f(\bm{v}) = 2\frac{\partial f(\bm{v})}{\partial\bm{w}_k^{\mathsf{H}}}$.

To proof Theorem \ref{mainT}, under the assumption that the rows of the encoding matrices $\bm{c}_{ij}\sim \mathcal{N}(0,\frac{1}{2}\bm{I}_N)+i\mathcal{N}(0,\frac{1}{2}\bm{I}_N)$,  we first characterize the local geometry in the region of incoherence and contraction (RIC) where the objective function
enjoys restricted strong convexity and smoothness near
the ground truth $\bm{v}^{\natural}$,
please refer to Lemma \ref{L1}. The error contraction, i.e., convergence analysis, is further established in Lemma \ref{L2} based on the property of the local geometry. We then exploit the induction arguments to demonstrate that the iterates of Algorithm \ref{spec_in}, including the spectral initialization point, stay within the RIC, please refer to Lemma \ref{L3}.

\begin{definition}[$(\phi,\beta,\gamma,\bm{z}^\natural)-\mathcal{R}$ the region of incoherence and contraction]Define $\bm{z}_i=[\begin{matrix}
	\bm{x}_i^{\mathsf{H}}~\bm{h}_i^{\mathsf{H}}
	\end{matrix}]^{\mathsf{H}}\in\mathbb{ C}^{N+K}$ and $
	\bm{z} = [
	\bm{z}_1^{\mathsf{H}}\cdots\bm{z}_s^{\mathsf{H}}
	]^{\mathsf{H}}\in \mathbb{ C}^{s(N+K)}
	$. If $\bm{z}$ is in the region of incoherence and contraction (RIC), i.e., $\bm{z}\in(\phi,\beta,\gamma,\bm{z}^\natural)-\mathcal{R}$, it holds that
	\begin{subequations}\label{con85}
	\begin{align}
	&\quad\quad\mathrm{dist}(\bm{z}^{t},\bm{z}^\natural)\leq \phi,\label{16a}\\
	&       \max_{1\leq i \leq s,1\leq j \leq m}\left|\bm{c}_{ij}^{\mathsf{H}}\left(\widetilde{\bm{x}}_i^t-\bm{x}_i^\natural\right)\right|\cdot\|\bm{x}_i^\natural\|_2^{-1}\leq {C_3}\beta,\\
	&\max_{1\leq i \leq s,1\leq j \leq m}\left|\bm{b}_{j}^{\mathsf{H}}\widetilde{\bm{h}}_i^t\right|\cdot\|\bm{h}_i^\natural\|_2^{-1}\leq {C_4}\gamma,\label{con85c}
	\end{align}
	for some constants $C_3,C_4>0$ and some sufficiently small constants $\phi, \beta ,\gamma>0$. Here, $\widetilde{\bm{h}}^t_i$ and $\widetilde{\bm{x}} ^t_i
	$ are defined as $\widetilde{\bm{h}}^t_i = \frac{1}{\overline{\psi_i^t}}\bm{h}_i^t$ and $\widetilde{\bm{x}} ^t_i= \psi_i^t\bm{x}_i^t$ for $i = 1,\cdots, s$, where $ \psi_i^t$ is the alignment parameter.
\end{subequations}
\end{definition}

The Riemannian Hessian is denoted as $
{\hess }f(\bm{v}):=\diagg(\{{\hess }_{\bm{w}_i}f\}_{i=1}^s).
$
\begin{lemma}\label{L1}
Suppose there is a sufficiently small constant $\delta>0$. If the number of measurements obeys $m\gg \mu^2s^2\kappa^2\max{\{N,K\}}\log^5m $, then with probability at least $1-\mathcal{O}(m^{-10})$, the Riemannian Hessian ${\hess }f(\bm{v})$ obeys
	\begin{align}
	\bm{u}^{\mathsf{H}}\left[\bm{D}{\hess }f(\bm{v})  + {\hess }f(\bm{v})\bm{D}\right]\bm{u}\geq \frac{1}{4\kappa}\|\bm{u}\|_2^2~~{\text{and}}
	~~\left\|{\hess }f(\bm{v})\right\|\leq 2+s
	\end{align}
	simultaneously for all 
	$
	\bm{u} = [~\begin{matrix}
	\bm{u}_1^{\mathsf{H}}~\cdots~\bm{u}_s^{\mathsf{H}}
	\end{matrix}~]^{\mathsf{H}}~\textrm{with}~\bm{u}_i = [~\begin{matrix}
	(\bm{x}_i-\bm{x}_i^\prime)^{\mathsf{H}}~
	(\bm{h}_i-\bm{h}_i^\prime)^{\mathsf{H}}~
	{(\bm{x}_i-\bm{x}_i^\prime)}^{\top}~
	{(\bm{h}_i-\bm{h}_i^\prime)}^{\top}
	\end{matrix}~]^{\mathsf{H}},
$ and $\bm{D} = \diagg\left(\{\bm{W}_i\}_{i=1}^s\right)$ with $
\bm{W}_i = \diagg\left(\left[
	\overline{\beta}_{i1}\bm{I}_K~
	\overline{\beta}_{i2}\bm{I}_N~
	\overline{\beta}_{i1}\bm{I}_K~
	\overline{\beta}_{i2}\bm{I}_N
	\right]^*     \right).$

	Here $\bm{v}$ is in the region $(\delta,\frac{1}{\sqrt{s}\log^{3/2}m},\frac{\mu}{\sqrt{m}}\log^{2}m,\bm{v}^\natural)-\mathcal{R}$, and one has
	$
	\max \{\|\bm{h}_i-\bm{h}_i^\natural\|_2,\|\bm{h}_i^\prime-\bm{h}_i^\natural\|_2,\|\bm{x}_i-\bm{x}_i^\natural\|_2,
	\|\bm{x}_i^\prime-\bm{x}_i^\natural\|_2\} \leq{\delta}/({\kappa\sqrt{s}}),
	$
	for $i=  1,\cdots,s$ and $\bm{W}_i$'s satisfy that for $\beta_{i1},\beta_{i2}\in\mathbb{ R}$, for $i=  1,\cdots,s$
	$
	\max_{1\leq i \leq s}\max\left\{|\beta_{i1}-\frac{1}{\kappa}|,|\beta_{i2}-\frac{1}{\kappa}|\right\}\leq \frac{\delta}{\kappa\sqrt{s}}.
	$
	Therein, $C_3,C_4\geq 0$ are numerical constants.
\end{lemma}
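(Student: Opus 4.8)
The plan is to follow the by-now standard local-geometry strategy for nonconvex low-rank recovery: decompose the Riemannian Hessian into an expectation (population) part and a fluctuation part, establish the two spectral bounds in closed form for the population part, and then show the fluctuation is a small fraction of the population bounds uniformly over the RIC region. First I would write the quadratic form $\bm{u}^{\mathsf{H}}[\bm{D}\,\hess f(\bm{v}) + \hess f(\bm{v})\bm{D}]\bm{u}$ explicitly using the Wirtinger expressions of Appendix \ref{grad&hess}. From (\ref{ehess}), $\hess_{\bm{w}_k}f[\bm{\eta}_{\bm{w}_k}]$ is a sum over the $m$ measurements of terms that are linear in the residual $c_i=\sum_k[\mathcal{J}_k(\bm{w}_k\bm{w}_k^{\mathsf{H}})]_i-y_i$ and in the perturbation coefficient $b_i$. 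The quadratic form therefore splits into a \emph{residual-free} part (products of the $b_i$) and a \emph{residual} part proportional to $c_i$. Since the residual vanishes at $\bm{v}^\natural$ and stays small throughout the RIC, the leading contribution is the residual-free part, which I would treat as the main object and bound the residual part as an error term.

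Next I would compute the expectation over $\bm{c}_{ij}\sim\mathcal{N}(0,\tfrac12\bm{I})+i\mathcal{N}(0,\tfrac12\bm{I})$. Using $\mathbb{E}[\bm{c}_{ij}\bm{c}_{ij}^{\mathsf{H}}]=\bm{I}$ together with the partial-DFT identities ($\sum_j\bm{b}_j\bm{b}_j^{\mathsf{H}}=\bm{I}_K$ and $\|\bm{b}_j\|$ uniformly bounded), the population quadratic form collapses to a block-diagonal (over users) positive form. Its restriction to the admissible directions $\bm{u}_i$, \emph{after} the $\bm{D}$-symmetrization, is bounded below by a constant multiple of $\tfrac{1}{\kappa}\|\bm{u}\|_2^2$ and above by $O(1+s)$. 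The weighting $\bm{D}$ with diagonal entries $\beta_{i1},\beta_{i2}\approx\tfrac1\kappa$ is precisely what rebalances the $\bm{x}$- and $\bm{h}$-blocks so that the flat direction associated with the scaling ambiguity $\bm{w}_k\mapsto a_k\bm{w}_k$ is cancelled and strong convexity appears transversally to the equivalence class (\ref{equ2}).

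Then I would control the fluctuation $\hess f(\bm{v})-\mathbb{E}\,\hess f(\bm{v})$ uniformly over the region. The deviation terms are quadratic forms in the Gaussian vectors $\bm{c}_{ij}$, which I would handle by a covering-net argument over $(\delta,\tfrac{1}{\sqrt{s}\log^{3/2}m},\tfrac{\mu}{\sqrt m}\log^2 m,\bm{v}^\natural)$-$\mathcal{R}$ combined with matrix-Bernstein / Hanson--Wright type concentration, bounding the residual contribution via the RIC sup-norm conditions (\ref{con85}) that force $|\bm{c}_{ij}^{\mathsf{H}}(\widetilde{\bm{x}}_i^t-\bm{x}_i^\natural)|$ and $|\bm{b}_j^{\mathsf{H}}\widetilde{\bm{h}}_i^t|$ to be small, and using the incoherence hypothesis $\max_{k,j}|\bm{b}_j^{\mathsf{H}}\bm{h}_k^\natural|/\|\bm{h}_k^\natural\|_2\le\mu/\sqrt m$. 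The sample-complexity requirement $m\gg\mu^2 s^2\kappa^2\max\{N,K\}\log^5 m$ is then chosen so that the fluctuation is at most a constant fraction of the population bounds: the $\max\{N,K\}$ pays for the net dimension, the $\log^5 m$ for the union bound and the sub-exponential tails, and the $s^2$ for the $O(s^2)$ cross-user products that arise when $\bm{u}_i$ couples to $\bm{u}_j$. Combining the population estimates with this fluctuation bound yields $\bm{u}^{\mathsf{H}}[\bm{D}\,\hess f+\hess f\,\bm{D}]\bm{u}\ge\tfrac{1}{4\kappa}\|\bm{u}\|_2^2$ and $\|\hess f(\bm{v})\|\le 2+s$ on the stated event, adapting to the multi-user Riemannian quotient geometry the single-variable analysis of regularized gradient descent in \cite{ling2017regularized}.

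The step I expect to be the main obstacle is the uniform fluctuation control: the two inequalities must hold \emph{simultaneously} for every $\bm{v}$ in a continuous region and every admissible direction $\bm{u}$, which forces a delicate net (or leave-one-out) argument to decouple the randomness in $\bm{c}_{ij}$ from the point $\bm{v}$ it is used to evaluate. The cross-user coupling, which is the source of the $s^2$ factor, is the most error-prone piece, since it must be estimated tightly enough not to erode the $\tfrac{1}{\kappa}$ strong-convexity margin supplied by the population term.
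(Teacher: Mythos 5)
Your overall blueprint (split the Wirtinger Hessian into a population part plus a fluctuation part, compute the population bound using $\mathbb{E}[\bm{c}_{ij}\bm{c}_{ij}^{\mathsf{H}}]=\bm{I}$ and $\sum_j\bm{b}_j\bm{b}_j^{\mathsf{H}}=\bm{I}_K$, then control the deviation uniformly via nets and concentration under the RIC/incoherence conditions) is indeed the strategy this lemma rests on; note that the paper itself gives no argument at all --- its ``proof'' is the single remark that Lemmas~\ref{L1}--\ref{L3} follow from Lemmas~1--7 of \cite{dong} --- and the cited proofs have exactly the structure you outline.

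There is, however, a genuine gap at the step you present as the conceptual core: the claim that the weighting $\bm{D}$ (with $\beta_{i1},\beta_{i2}\approx 1/\kappa$) ``cancels the flat direction associated with the scaling ambiguity.'' No choice of $\bm{D}$ can do this. The objective is invariant under $(\bm{x}_i,\bm{h}_i)\mapsto(\lambda\bm{x}_i,\bm{h}_i/\bar{\lambda})$, so along the curve $t\mapsto\bigl((1+t)\bm{x}_i^{\natural},(1-t)\bm{h}_i^{\natural}\bigr)$ the residual is $O(t^2)$ and $f$ is $O(t^4)$; hence at $\bm{v}=\bm{v}^{\natural}$ (which lies in the stated RIC) the Hessian quadratic form vanishes exactly along the direction $\bm{u}_{\mathrm{flat}}$ obtained from $\bm{x}_i-\bm{x}_i'=c\,\bm{x}_i^{\natural}$, $\bm{h}_i-\bm{h}_i'=-c\,\bm{h}_i^{\natural}$. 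Since the noiseless Hessian at the global minimizer is positive semidefinite on conjugate-symmetric directions, $\bm{u}_{\mathrm{flat}}$ lies in its kernel, and therefore $\bm{u}_{\mathrm{flat}}^{\mathsf{H}}\left[\bm{D}\,\hess f(\bm{v}^{\natural})+\hess f(\bm{v}^{\natural})\,\bm{D}\right]\bm{u}_{\mathrm{flat}}=0$ for \emph{every} block-diagonal $\bm{D}$, admissible ones included. Moreover, for small $c$ this $\bm{u}_{\mathrm{flat}}$ is admissible under the pure proximity constraints you impose (all four points stay within $\delta/(\kappa\sqrt{s})$ of the truth), so the population lower bound you assert fails, and no covering-net or concentration argument can repair a bound that is already false in expectation. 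In the line of work this lemma is borrowed from, the degeneracy is removed differently: the bound is only ever applied to directions $\bm{u}$ arising as errors of \emph{optimally aligned} iterates, and the first-order optimality of the alignment parameters $\psi_i$ is what keeps $\bm{u}$ (nearly) orthogonal to the flat directions, while $\bm{D}$ absorbs the iteration-to-iteration drift of those alignment parameters and the $\kappa$-rebalancing across users --- not the scaling degeneracy. Your plan needs this alignment (or an equivalent transversality restriction on $\bm{u}$) built in before the population step; as written, the $\tfrac{1}{4\kappa}\|\bm{u}\|_2^2$ bound cannot be established.
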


\begin{lemma}\label{L2}
	Suppose the number of measurements satisfies $m\gg\mu^2s^2\kappa^4\max{\{N,K\}}\log^5m$ and the step size obeys $\alpha_t>0$ and $\alpha_t\equiv\alpha\asymp s^{-1}$. Then with probability at least $1-\mathcal{O}(m^{-10})$,
	\begin{align}
		&\mathrm{dist}(\bm{v}^{t+1},\bm{v}^\natural)\leq (1-\frac{\alpha}{16\kappa})\mathrm{dist}(\bm{v}^t,\bm{v}^\natural),
	\end{align}
	provided that
	$\bm{v}$ is in the region $(\delta,\frac{1}{\sqrt{s}\log^{3/2}m},\frac{\mu}{\sqrt{m}}\log^{2}m,\bm{v}^\natural)-\mathcal{R}$.
\end{lemma}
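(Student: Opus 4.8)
The plan is to establish the one-step contraction by the usual strong-convexity-plus-smoothness recipe, but carried out in the aligned, phase-invariant coordinates dictated by the distance function $\mathrm{dist}(\cdot,\cdot)$, with the per-block step normalization absorbed into the diagonal scaling $\bm{D}$ that already appears in Lemma \ref{L1}. Everything takes place on the high-probability event produced by Lemma \ref{L1}, so the probability $1-\mathcal{O}(m^{-10})$ is inherited and no fresh randomness is introduced.

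First I would dispose of the alignment ambiguity. For each block $i$, let $\psi_i^t$ attain the minimum defining $\mathrm{dist}(\bm{v}_i^t,\bm{v}_i^\natural)$, and let $\bm{d}_i^t$ denote the aligned error stacking $\tfrac{1}{\overline{\psi_i^t}}\bm{h}_i^t-\bm{h}_i^\natural$ and $\psi_i^t\bm{x}_i^t-\bm{x}_i^\natural$. Because $\mathrm{dist}(\bm{v}^{t+1},\bm{v}^\natural)$ is itself a minimum over the alignment parameters of the $(t{+}1)$-th iterate, evaluating at the possibly suboptimal choice $\psi_i^{t}$ already upper-bounds it, which lets me carry the alignment of step $t$ through one iteration. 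Since $f$ is invariant under $\bm{x}_k\mapsto\psi_k\bm{x}_k$, $\bm{h}_k\mapsto\tfrac{1}{\overline{\psi_k}}\bm{h}_k$, the update rule (\ref{update}) commutes with alignment, so in aligned coordinates the iteration reads $\bm{d}^{t+1}=\bm{d}^t-\alpha\,\bm{D}\,\nabla f(\bm{v}^t)$, with $\bm{D}$ the block-diagonal $\tfrac{1}{\|\bm{w}_k^t\|_2^2}$ scaling matching the one in Lemma \ref{L1}.

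Next I would expand the squared distance,
\begin{align}
\mathrm{dist}^2(\bm{v}^{t+1},\bm{v}^\natural)\le\|\bm{d}^{t+1}\|_2^2=\|\bm{d}^t\|_2^2-2\alpha\,\mathrm{Re}\langle\bm{D}\nabla f(\bm{v}^t),\bm{d}^t\rangle+\alpha^2\|\bm{D}\nabla f(\bm{v}^t)\|_2^2,
\end{align}
and control the two correction terms by Lemma \ref{L1}. Since the ground truth is a critical point in the noiseless model, $\nabla f(\bm{v}^\natural)=0$, so the fundamental theorem of calculus gives $\nabla f(\bm{v}^t)=\big[\int_0^1\hess f(\bm{v}^\natural+\tau\bm{d}^t)\,\mathrm{d}\tau\big]\bm{d}^t$. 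The cross term then becomes a quadratic form in the averaged Hessian, which Lemma \ref{L1} controls through the symmetrized scaled operator $\bm{D}\hess f+\hess f\,\bm{D}$ bounded below by $\tfrac{1}{4\kappa}\|\bm{d}^t\|_2^2$; the RIC hypothesis guarantees the whole segment $\{\bm{v}^\natural+\tau\bm{d}^t\}$ stays in the region where Lemma \ref{L1} is valid, so the bound holds uniformly along the path, yielding $2\alpha\,\mathrm{Re}\langle\bm{D}\nabla f(\bm{v}^t),\bm{d}^t\rangle\ge\tfrac{\alpha}{4\kappa}\|\bm{d}^t\|_2^2$.

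Finally I would balance the two terms through the step size. Using $\|\hess f\|\le2+s$ and $\alpha\asymp s^{-1}$, the quadratic overshoot $\alpha^2\|\bm{D}\nabla f(\bm{v}^t)\|_2^2$ can be dominated by half of the linear descent term, leaving
\begin{align}
\mathrm{dist}^2(\bm{v}^{t+1},\bm{v}^\natural)\le\Big(1-\tfrac{\alpha}{8\kappa}\Big)\|\bm{d}^t\|_2^2\le\Big(1-\tfrac{\alpha}{16\kappa}\Big)^2\mathrm{dist}^2(\bm{v}^t,\bm{v}^\natural),
\end{align}
and taking square roots gives the claim. The main obstacle is precisely this balancing step: because the smoothness bound $2+s$ grows with the number of users, the crude estimate $\alpha^2\|\bm{D}\nabla f\|_2^2\le\alpha^2\|\bm{D}\|^2\|\hess f\|^2\|\bm{d}^t\|_2^2$ is of order one when $\alpha\asymp s^{-1}$ and does not vanish. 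The fix is to bound $\|\bm{D}\nabla f(\bm{v}^t)\|_2^2$ against the strong-convexity quadratic form $(\bm{d}^t)^{\mathsf{H}}(\bm{D}\hess f+\hess f\,\bm{D})\bm{d}^t$ rather than against $\|\bm{d}^t\|_2^2$ directly, exploiting the preconditioner structure of $\bm{D}$ so that the overshoot is a small multiple of the descent itself; a secondary delicate point, needed to close the induction of Lemma \ref{L3}, is verifying that the updated iterate does not leave the RIC.
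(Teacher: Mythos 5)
Your high-level strategy---align the iterates, expand the squared distance, write $\nabla f(\bm{v}^t)$ via the fundamental theorem of calculus as a path-averaged Hessian acting on the aligned error, then invoke the restricted strong convexity and smoothness of Lemma \ref{L1}---is exactly the route the paper intends: the paper does not write this argument out itself but defers it (see the Remark following Lemma \ref{L3}) to Lemmas 1--7 of \cite{dong}, which implement precisely this scheme. The difficulty is that the two points where your write-up stops short are precisely where the content of the lemma lives. First, the assertion that the update rule \eqref{update} ``commutes with alignment'' is false as stated. The objective $f$ is invariant under the scaling ambiguity, but the preconditioner $1/\|\bm{w}_k\|_2^2$ is not: $\|\bm{w}_k\|_2^2=\|\bm{x}_k\|_2^2+\|\bm{h}_k\|_2^2$ changes under a non-unimodular alignment $\psi_k$, and the two gradient blocks transform with opposite factors, so the aligned recursion $\bm{d}^{t+1}=\bm{d}^t-\alpha\bm{D}\nabla f(\bm{v}^t)$ holds only up to multiplicative errors of size $\bigl||\psi_i^t|^{2}-1\bigr|$. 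Closing this requires a separate estimate that the iterates remain nearly balanced so that $|\psi_i^t|\approx 1$; this is a standalone step in \cite{dong} and cannot be obtained from invariance of $f$ alone.

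Second, and more seriously, the ``balancing step'' that you yourself flag as the main obstacle is left unresolved, and it is exactly where the claimed step size $\alpha\asymp s^{-1}$ must be earned. As you observe, the crude bound $\alpha^2\|\bm{D}\nabla f\|_2^2\le\alpha^2\|\bm{D}\|^2\|{\hess}f\|^2\|\bm{d}^t\|_2^2$ is useless because $\alpha\|{\hess}f\|=\Theta(1)$; with it one only proves contraction for step sizes of order $1/s^2$ (up to $\kappa$ factors), not $s^{-1}$. Your proposed repair---dominate the overshoot by the strong-convexity form---is the right idea but is not yet a proof: an inequality such as $\|\bm{D}A\bm{u}\|_2^2\lesssim\|\bm{D}\|\,\|A\|\,\bm{u}^{\mathsf{H}}(\bm{D}A+A\bm{D})\bm{u}$, with $A$ the path-averaged Hessian, requires $A\succeq 0$ as an operator, whereas Lemma \ref{L1} only gives positivity of the symmetrized, $\bm{D}$-weighted quadratic form on a restricted cone of directions; it also requires $\bm{D}$ to be essentially scalar, which must be reconciled with Lemma \ref{L1}'s normalization ($\beta_{i1},\beta_{i2}\approx 1/\kappa$, which is not the raw preconditioner $1/\|\bm{w}_k^t\|_2^2$, but the per-block weights $d_i$ in the definition of $\mathrm{dist}$ enter here). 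Note also that Lemma \ref{L2} demands $m\gg\mu^2s^2\kappa^4\max\{N,K\}\log^5m$ while Lemma \ref{L1} only needs $\kappa^2$; this gap in sample complexity signals that the missing inequality consumes additional concentration (restricted smoothness of the cross-user Hessian blocks) beyond the event of Lemma \ref{L1}, contrary to your claim that no fresh randomness is needed. These missing estimates constitute the technical heart of the error-contraction proof in \cite{dong}, so as written your argument does not establish the lemma at the stated step size.
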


\begin{lemma}\label{L3}
	The spectral initialization point $\bm{v}^0$ is in the region $(\frac{1}{\log m},\frac{1}{\sqrt{s}\log^{3/2}m},\frac{\mu}{\sqrt{m}}\log^{2}m,\bm{v}^\natural)-\mathcal{R}$ with probability at least $1-\mathcal{O}(m^{-9})$, provided $m\gg \mu^2s^2\kappa^2\max{\{K,N\}}\log^6m$. Suppose $t$-th iteration $\bm{v}^t$ is in the region $(\frac{1}{\log m},\frac{1}{\sqrt{s}\log^{3/2}m},\frac{\mu}{\sqrt{m}}\log^{2}m,\bm{v}^\natural)-\mathcal{R}$ and the number of measurements satisfy $m\gg \mu^2s^2\kappa^2\max{\{K,N\}}\log^8m$. Then with probability at least $1-\mathcal{O}(m^{-9})$,
	the $(t+1)$-th iteration $\bm{v}^{t+1}$ is also in the region $(\frac{1}{\log m},\frac{1}{\sqrt{s}\log^{3/2}m},\frac{\mu}{\sqrt{m}}\log^{2}m,\bm{v}^\natural)-\mathcal{R}$,
	provided that the step size satisfies $\alpha_t>0$ and $\alpha_t\equiv\alpha\asymp s^{-1}$.
\end{lemma}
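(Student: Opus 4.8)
The plan is to establish both parts of Lemma~\ref{L3}---the base case for the spectral initialization and the one-step induction---via the \emph{leave-one-out} (implicit-regularization) technique. The guiding observation is that the three RIC conditions in \eqref{con85} cannot be propagated in isolation: the two incoherence bounds measure the correlation of the iterate with \emph{individual} measurement vectors, and a naive Cauchy--Schwarz estimate is far too lossy because $\bm{w}_k^t$ statistically depends on every $\bm{c}_{kl}$. To decouple this dependence I would, for each sample index $l\in[m]$ and each user $k$, introduce an auxiliary \emph{leave-one-out} trajectory $\{\bm{w}_k^{t,(l)}\}$ generated by running Algorithm~\ref{spec_in} (spectral initialization included) on the objective that discards the $l$-th measurement, i.e.\ restricts $\bm{y}$ and the operators $\mathcal{J}_k$ to the index set $[m]\setminus\{l\}$. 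By construction $\bm{w}_k^{t,(l)}$ is independent of $\bm{c}_{kl}$ and $\bm{b}_l$, which is exactly what makes the incoherence inner products tractable.

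For the base case I would analyze the spectral estimate through the matrix $\bm{N}_i=\sum_{j}y_j\bm{b}_j\bm{c}_{ij}^{\mathsf{H}}$. A short computation shows that $\mathbb{E}[\bm{N}_i]$ is a rank-one matrix whose leading left and right singular vectors align with $\bm{h}_i^\natural$ and $\bm{x}_i^\natural$, the cross-user and noise terms averaging out by independence; a matrix-Bernstein bound then gives $\bm{N}_i\approx\mathbb{E}[\bm{N}_i]$ once $m\gg\mu^2 s^2\kappa^2\max\{K,N\}\log^6 m$, and a Davis--Kahan $\sin\Theta$ argument transfers this to the singular vectors, yielding the distance bound \eqref{16a} with $\phi=1/\log m$ after optimal alignment. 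For the two incoherence bounds at $t=0$ I would compare the true singular vectors with the leave-one-out spectral vectors $\check{\bm{x}}_i^{0,(l)},\check{\bm{h}}_i^{0,(l)}$: since the latter are independent of the $l$-th measurement, $\bm{c}_{il}^{\mathsf{H}}(\widetilde{\bm{x}}_i^{0,(l)}-\bm{x}_i^\natural)$ is a zero-mean Gaussian whose tail supplies the $\sqrt{\log m}$ factor, while the gap between the true and leave-one-out estimates is controlled by a second application of Davis--Kahan.

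For the induction step I would carry a strengthened hypothesis that, on top of \eqref{con85}, bounds the leave-one-out discrepancy $\max_{k,l}\mathrm{dist}\big(\bm{w}_k^t,\bm{w}_k^{t,(l)}\big)$. The distance condition \eqref{16a} at $t+1$ is inherited for free from the contraction in Lemma~\ref{L2}, since the factor $1-\tfrac{\alpha}{16\kappa}<1$ keeps the iterate inside the ball of radius $\phi$. To propagate the $\bm{c}$-incoherence (the second line of \eqref{con85}) I would write, after consistent alignment,
\begin{align*}
\big|\bm{c}_{kl}^{\mathsf{H}}(\widetilde{\bm{x}}_k^{t+1}-\bm{x}_k^\natural)\big|
\leq \big|\bm{c}_{kl}^{\mathsf{H}}(\widetilde{\bm{x}}_k^{t+1,(l)}-\bm{x}_k^\natural)\big|
+\|\bm{c}_{kl}\|_2\,\big\|\widetilde{\bm{x}}_k^{t+1}-\widetilde{\bm{x}}_k^{t+1,(l)}\big\|_2 ,
\end{align*}
bounding the first summand by independence plus a Gaussian tail and the second by the leave-one-out hypothesis together with $\|\bm{c}_{kl}\|_2\lesssim\sqrt{N}$; the analogous decomposition, using the deterministic bound $|\bm{b}_l^{\mathsf{H}}\bm{h}_k^\natural|\le\mu\|\bm{h}_k^\natural\|_2/\sqrt{m}$ for the leading term, propagates \eqref{con85c}. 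The leave-one-out hypothesis itself is closed by differencing the two normalized gradient updates \eqref{update}, expanding the residual and the factor $\|\bm{w}_k^t\|_2^{-2}$, and invoking the restricted strong convexity and smoothness of Lemma~\ref{L1} to show the update contracts the discrepancy up to a perturbation of order $\alpha$ coming from the single dropped term.

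The main obstacle is precisely this last closure: it requires simultaneously tracking the iterate, all $m$ leave-one-out trajectories, and the $s$ time-varying alignment parameters $\psi_k^t$, and showing that the nonlinear, normalized gradient map shrinks the leave-one-out discrepancy without inflating the incoherence. Two points deserve care. First, the normalization $\|\bm{w}_k^t\|_2^{-2}$ couples the $\bm{x}$- and $\bm{h}$-blocks, so the alignment parameters of the true and leave-one-out sequences must be chosen jointly (rather than independently) to prevent a spurious gain when differencing $\widetilde{\bm{x}}_k^{t+1}$ against $\widetilde{\bm{x}}_k^{t+1,(l)}$. Second, every concentration event---over the $sm$ Gaussian vectors, the $m$ auxiliary problems, and each iteration---must hold on one common high-probability event; the union bound over $l\in[m]$ is what forces the $\log^8 m$ exponent in $m\gg\mu^2 s^2\kappa^2\max\{K,N\}\log^8 m$ and the $\mathcal{O}(m^{-9})$ failure probability, and keeping these exponents sharp across all $s$ coupled users is where the bulk of the estimation lies.
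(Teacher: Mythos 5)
Your proposal takes essentially the same route as the paper's own proof: the paper proves Lemma~\ref{L3} only by deferring to the proofs of Lemmas~1--7 in \cite{dong}, and those proofs are precisely the leave-one-out (implicit regularization) argument you outline---spectral initialization controlled by matrix concentration plus a Davis--Kahan perturbation bound, and an induction step that augments the RIC hypotheses with a leave-one-out discrepancy bound so that the two incoherence conditions can be propagated via independence and Gaussian tails, with the union bound over samples and iterations accounting for the extra logarithmic factors in the sample complexity. Your sketch is consistent with that argument in both structure and the source of the $\log^{8}m$ requirement.
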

\begin{remark}
	The proofs of Lemma \ref{L1}, Lemma \ref{L2} and Lemma \ref{L3} are mainly based on the proofs of Lemma 1-Lemma 7 in  \cite{dong}.
\end{remark}

	\bibliographystyle{ieeetr}
	\bibliography{Reference}
	
\end{document}